\newcommand{\cL}{\mathcal{L}}
\newcommand{\cP}{\mathcal{P}}
\newcommand{\cQ}{\mathcal{Q}}
\newcommand{\cT}{\mathcal{T}}
\newcommand{\cX}{\mathcal{X}}
\newcommand{\sH}{\mathscr{H}}
\newcommand{\bN}{\mathbb{N}}
\newcommand{\bC}{\mathbb{C}}
\newcommand{\olin}[1]{\cL(#1)}
\newcommand{\opos}[1]{\cP(#1)}
\newcommand{\Hmin}{H_{\mathrm{min}}}
\newcommand{\nbox}[2]{\hspace{#2pt} \mbox{#1} \hspace{#2pt}}
\newcommand{\ket}[1]{| \hspace{1pt} #1 \rangle}
\newcommand{\ketbrad}[2]{| \hspace{1pt} #1 \rangle \langle #2 \hspace{1pt} |}
\newcommand{\ketbra}[1]{\ketbrad{#1}{#1}}
\newcommand{\braket}[2]{\langle #1 \hspace{1pt} | \hspace{1pt} #2 \rangle}
\newcommand{\bra}[1]{\langle #1 \hspace{1pt} |}
\newcommand{\abs}[1]{| #1 |}
\newcommand{\tr}[1]{\mathrm{tr}(#1)}
\newcommand{\Tr}{\mathrm{tr}}
\newcommand{\id}{1}
\newcommand{\norm}[2][]{#1\| #2 #1\|}
\newcommand{\proj}[1]{|#1 \rangle\!\langle #1 |}
\newcommand{\pos}{{\sl pos}}
\newcommand{\dens}{{\cal S}}
\newcommand{\eps}{\varepsilon}
\newcommand{\pwin}{p_{\textnormal{win}}}
\newcommand{\pg}{p_{\textnormal{guess}}}
\newcommand{\BB}{\text{\rm\tiny BB84}}
\def\01{\{0,1\}}
\newtheorem{lemma}{Lemma}
\newtheorem{definition}{Definition}
\newtheorem{proposition}[lemma]{Proposition}
\newtheorem{corollary}[lemma]{Corollary}
\newtheorem{theorem}[lemma]{Theorem}
\begin{document}

\title{{A Monogamy-of-Entanglement Game With Applications to Device-Independent Quantum Cryptography}}

 \author{Marco Tomamichel}                        
 \email{cqtmarco@nus.edu.sg}
 \affiliation{Centre for Quantum Technologies, National University of Singapore}
 \author{Serge Fehr}                   
 \email{serge.fehr@cwi.nl}
 \affiliation{Centrum Wiskunde \& Informatica (CWI), Amsterdam, The Netherlands}
 \author{J\k{e}drzej Kaniewski}
 \author{Stephanie Wehner}
 \affiliation{Centre for Quantum Technologies, National University of Singapore}


\begin{abstract}
  We consider a game in which two separate laboratories collaborate to prepare a quantum system and are then asked to guess the outcome of a
  measurement performed by a third party in a random basis on that system. 
Intuitively, by the uncertainty principle and the monogamy of entanglement, the probability that {\em both} players simultaneously succeed in guessing the outcome 
correctly is bounded. 
We are interested in the question of how
  the success probability scales when many such games are performed in parallel. 
 We show that any strategy that maximizes the probability to win every game individually is
  also optimal for the parallel repetition of the game.
  Our result implies that the optimal guessing probability can be achieved without the use of entanglement.
  
  We explore several applications of this result. First, we show that it implies
  security for standard BB84 quantum key distribution when the receiving party uses {\em fully untrusted measurement devices}, i.e.\ we show that BB84 is one-sided device independent.
  Second, we show how our result can be used to prove security
  of a one-round position-verification scheme. 
  Finally, we generalize a well-known uncertainty relation
  for the guessing probability to quantum side information.
\end{abstract}

\maketitle

\section{Introduction}

Apart from their obvious entertainment value, games among multiple (competing) players often provide an intuitive way to understand complex problems.
For example, we may understand Bell inequalities in physics~\cite{bell64}, or interactive proofs in computer science~\cite{benor:ip}, as a game played 
by a referee against multiple provers~\cite{feige:mip,cleve:nonlocal}. 
Here we investigate a simple quantum multiplayer game whose analysis enables us to tackle several open questions in quantum cryptography. 

\subsection{Monogamy Game}

We consider a game played among three parties: Alice, Bob and Charlie (these players should be seen as operating in three different laboratories). In this game,
Alice takes the role of a referee and is assumed to be honest whereas Bob and Charlie form a team
determined to beat Alice.
A monogamy-of-entanglement game $\mathsf{G}$ consists of a list of 
measurements, $\mathcal{M}^\theta = \{F_x^\theta\}_{x \in \cal X}$, indexed by $\theta \in \Theta$, on a $d$-dimensional 
quantum system.

\begin{description}
    \item[Preparation Phase] Bob and Charlie agree on a \emph{strategy} and prepare an arbitrary 
    quantum state $\rho_{ABC}$, where $\rho_A$ has dimension $d$. They pass $\rho_A$
    to Alice and hold on to $\rho_B$ and $\rho_C$, respectively. After this phase, Bob
    and Charlie are no longer allowed to communicate.
\vspace{-0.3cm}
    \item[Question Phase] Alice chooses $\theta \in \Theta$ uniformly at random and measures $\rho_A$ using $\mathcal{M}^\theta$ to obtain the \emph{measurement outcome}, $x \in \cal X$. She then announces $\theta$ to Bob and Charlie.
\vspace{-0.3cm}
    \item[Answer Phase] Bob and Charlie independently form a guess of $x$ by performing a measurement (which may depend on $\theta$) on their respective shares of the quantum state.
\vspace{-0.3cm}
    \item[Winning Condition] The game is won if \emph{both} Bob and Charlie guess $x$ correctly.
\end{description}

From the perspective of classical information processing, our game may appear somewhat trivial\,---\,after all, if Bob and Charlie were to provide some classical information $k$
to Alice who would merely apply a random function $f_\theta$, they could predict the value
of $x = f_{\theta}(k)$ perfectly from $k$ and $\theta$. In quantum mechanics, however, 
the well-known uncertainty principle~\cite{heisenberg27} places a limit on how well observers can predict the outcome $x$ of incompatible measurements.

To exemplify this, we will in the following focus on the game $\mathsf{G}_{\BB}$ in which Alice measures a qubit 
in one of the two BB84 bases~\cite{bb84} to obtain a bit $x \in \01$ and use $\pwin(\mathsf{G}_{\BB})$ to denote the probability that Bob and Charlie win, maximized over all strategies. (A strategy is comprised of a tripartite state $\rho_{ABC}$, and, for each $\theta \in \Theta$, a measurement
on $B$ and a measurement on $C$.)
Then, if Bob and Charlie are restricted to classical memory (i.e., they are not entangled with Alice),
it is easy to see that they win the game with an (average) probability of at most 
$1/2 + 1/(2\sqrt{2}) \leq \pwin(\mathsf{G}_{\BB})$.%
\footnote{For example, this follows from a proof of an entropic uncertainty relation by 
Deutsch~\cite{deutsch83}.}

In a fully quantum world, however, uncertainty is not quite the end of the story as indeed Bob and Charlie are allowed to have a \emph{quantum} memory. To illustrate
the power of such a memory, consider the same game played just between Alice and Bob. As Einstein, Podolsky and Rosen famously observed~\cite{epr35}: If $\rho_{AB}$ is a maximally entangled 
state, then once Bob learns Alice's choice of measurement $\theta$, he can perform an adequate measurement on his share of the state to obtain $x$ himself. That is, there exists a strategy
for Bob to guess $x$ perfectly. Does this change when we add the extra player, Charlie? 
We can certainly be hopeful as it turns out that quantum entanglement is ``monogamous''~\cite{terhal04} in the sense that the more entangled Bob is with Alice, the less entangled
Charlie can be. In the extreme case where $\rho_{AB}$ is maximally entangled, even if Bob can guess $x$ perfectly every time, Charlie has to resort to making an uninformed random guess. As both of them have to be
correct in order to win the game, this strategy turns out to be worse than optimal.

An analysis of this game thus requires a tightrope walk between uncertainty on the one hand, and the monogamy of entanglement on the other. 
The following result is a special case of our main result (which we explain further down); a slightly weaker bound had been derived in~\cite{chandran09}, and the exact value had first been proven by Christandl and Schuch~\cite{christandlpc10}.%
\footnote{However, neither the techniques from~\cite{chandran09} nor from~\cite{christandlpc10} work for parallel repetitions. }

\begin{itemize}
  \item \textbf{Result (informal):}
  We find $\pwin(\mathsf{G}_{\BB}) = 1/2 + 1/(2\sqrt{2}) \approx 0.85$. Moreover, this value can be achieved when 
  Bob and Charlie have a classical memory only.
\end{itemize}
Interestingly, we thus see that monogamy of entanglement wins out entirely, canceling the power of Bob and Charlie's quantum memory -- the optimal winning probability can be achieved without any entanglement at all. In fact, this strategy results in a higher success probability than the one in which Bob is maximally entangled with Alice and Charlie is classical. In such a case the winning probability can be shown to be at 
most $1/2$. 
In spirit, this result is similar to (but not implied by) recent results obtained in the study of non-local games where the addition of one or more extra parties cancels the advantage coming from the use of entanglement~\cite{ito12}.

To employ the monogamy game for quantum cryptographic purposes, we need to understand what happens if we play the same game $\mathsf{G}$ $n$ times in parallel. 
The resulting game, $\mathsf{G}^{\times n}$, requires both Bob and Charlie to
guess the entire \emph{string} $x = x_1 \ldots x_n$ of measurement outcomes, where
$x_j$, $j \in [n]$, is generated by measuring $\rho_{A_j}$ ($\rho_{A_j}$ is the quantum state provided by Bob and Charlie in the $j$-th round of the game) in the basis $\mathcal{M}^{\theta_j}$, and
$\theta_j \in \Theta$ is chosen uniformly at random. Strategies for Bob and Charlie are then
determined by the state $\rho_{A_1 \ldots A_n BC}$ (with each $A_{j}$ being $d$-dimensional) as
well as independent measurements on $B$ and $C$ that produce 
a guess of the string $x$, for each value of $\theta = \theta_1 \ldots \theta_n \in \Theta^{n}$.
{In the following, we say that a game satisfies \emph{parallel repetition} if $\pwin(\mathsf{G}^{\times n})$ drops exponentially in $n$. Moreover, we say that it satisfies \emph{strong parallel repetition}
if this exponential drop is maximally fast, i.e. if $\pwin(\mathsf{G}^{\times n}) = \pwin(\mathsf{G})^n$.}

Returning to our example, Bob and Charlie could repeat the strategy that is optimal for a single round $n$ times to achieve a winning probability of $\pwin(\mathsf{G}_{\BB})^n = (1/2 + 1/(2\sqrt{2})^n \leq \pwin(\mathsf{G}_{\BB}^{\times n})$, but is this really the best they can do?
Even classically, analyzing the $n$-fold parallel repetition of games or tasks is typically challenging. Examples include the parallel repetition of interactive proof 
systems (see e.g.~\cite{raz98, hol07}) 
or the analysis of communication complexity tasks (see e.g.~\cite{klauck}). In a quantum world, such an analysis is often exacerbated further by the presence
of entanglement and the fact that quantum information cannot generally be copied. Famous examples include the analysis of the ``parallel repetition'' of channels
in quantum information theory (where the problem is referred to as the additivity of capacities) (see e.g.~\cite{matt:cap, graeme:qcap}), entangled non-local 
games~\cite{thomas:parallel}, 
or the question whether an eavesdropper's optimal strategy in quantum key distribution (QKD) is to perform the optimal strategy for each round. 
Fortunately, it turns out that strong parallel repetition does hold for our monogamy game. 

\begin{itemize}
	\item {\bf Main Result (informal):} We find 
	$\pwin(\mathsf{G}_{\BB}^{\times n}) = (1/2 + 1/(2\sqrt{2}))^n$.
	More generally, for all monogamy-of-entanglement games using incompatible measurements, we find that
	$\pwin(\mathsf{G}^{\times n})$ decreases exponentially in $n$. This also holds in the approximate case where Bob and Charlie are allowed to make a small fraction of errors.
\end{itemize}

Our proofs are appealing in their simplicity and use only tools from linear algebra, inspired by techniques proposed by Kittaneh~\cite{kittaneh97}.
Note that, in the more general case, we obtain parallel repetition, albeit not strong parallel repetition.

\subsection{Applications}

\subsubsection*{One-Sided Device Independent Quantum Key Distribution}

Quantum key distribution (QKD) makes use of quantum mechanical effects to allow 
two parties, Alice and Bob, to exchange a secret key while being eavesdropped by an attacker Eve~\cite{bb84,ekert91}. In principle, the security of QKD can be rigorously proven based solely on the laws of quantum mechanics~\cite{mayers96,SP00,renner05}; in particular, the security does not rely on the assumed hardness of some computational problem.
However, these security proofs typically make stringent assumptions about the devices used by Alice and Bob to prepare and measure the quantum states that are communicated. 
These assumptions are not necessarily satisfied by real-world devices, leaving the implementations of QKD schemes open to hacking attacks~\cite{lydersen10}. 

One way to counter this problem is by protecting the devices in an ad-hoc manner against known attacks. This is somewhat unsatisfactory in that the implementation may still be vulnerable to {\em unknown} attacks, and the fact that the scheme is in principle provably secure loses a lot of its significance. 

Another approach is to try to remove the assumptions on the devices necessary for the security proof; this leads to the notion of {\em device-independent} (DI) QKD. This line of research can be traced back to Mayers and Yao~\cite{my98} (see also~\cite{barrett05,acin07}). After some limited results (see, e.g.,~\cite{masanes11,haenggirenner10}), the possibility of DI QKD has recently been shown in the most general case by Reichhardt {\em et al}. in~\cite{reichardt12} and by Vazirani and Vidick in~\cite{vidick12}. 
In a typical DI QKD scheme, Alice and Bob check if the classical data obtained from the quantum communication violates a Bell inequality, which in turn ensures that there is some amount of fresh randomness in the data that cannot be known by Eve. This can then be transformed into a secret key using standard cryptographic techniques like information reconciliation and randomness extraction.  

While this argument shows that DI QKD is theoretically possible,
the disadvantage of such schemes 
is that they require a \emph{long-distance detection-loophole-free} violation of a Bell inequality by Alice and Bob.
This makes fully DI QKD schemes very hard to implement and very sensitive to any kind of noise and to inefficiencies of the physical devices: any deficiency will result in a lower observed (loophole free) Bell inequality violation, and currently conceivable experimental parameters are insufficient to provide provable security. 
Trying to find ways around this problem is an active line of research, 
see e.g.~\cite{gisin10,lo12,braunstein12,lim12,pironio12}.

Here, we follow a somewhat different approach, 
not relying on Bell tests, but making use of the {\em monogamy of entanglement}. Informally, the latter states that if Alice's state is fully entangled with Bob's, then it cannot be entangled with Eve's, and vice versa. As a consequence, if Alice measures a quantum system by randomly choosing one of two incompatible measurements, it is impossible for Bob and Eve to {\em both} have low entropy about Alice's measurement outcome. 
Thus, if one can verify that Bob has low entropy about Alice's measurement during the run of the scheme, it is guaranteed that Eve's entropy is high, and thus that a secret key can be distilled. 

Based on this idea, we show that the standard BB84 QKD scheme~\cite{bb84} is {\em one-sided} DI. 
This means that only Alice's quantum device has to be trusted, but no assumption about Bob's measurement device has to be made in order to prove security. Beyond that it does not communicate the measurement outcome to Eve, Bob's measurement device may be arbitrarily malicious. 


\begin{itemize}
\item \textbf{Application to QKD (informal):} We show that the BB84 QKD scheme is secure in the setting of fully one-sided device independence and
provide a complete security analysis for finite key lengths.
\end{itemize}

One-sided DI security of BB84 was first claimed in~\cite{tomamichel11}. However, a close inspection of their proof sketch, which is based on an entropic uncertainty relation with quantum side information, reveals that their arguments are insufficient to prove full one-sided DI security (as confirmed by the authors). It needs to be assumed that Bob's measurement device is {\em memoryless}. The same holds for the follow up work~\cite{tomamichel12,branciard12} of~\cite{tomamichel11}. 

Despite the practical motivation, our result is at this point of theoretical nature. This is because, as in all contemporary fully DI schemes, our analysis here (implicitly) assumes that every qubit sent by Alice is indeed received by Bob, or, more generally, whether it is received or not does not depend on the basis it is to be measured in; this is not necessarily satisfied in practical implementations\,---\,and some recent attacks on QKD take advantage of exactly this effect by blinding the detectors whenever a measurement in a basis not to Eve's liking is attempted~\cite{lydersen10}. {We remark here that this unwanted assumption can be removed in principle by a refined analysis along the lines of Branciard \emph{et al.}~\cite{branciard12}\footnote{There, the protocol of~\cite{tomamichellim11} was amended to account for photon losses.}. While this leads to a significantly lower key rate, the analysis in~\cite{branciard12} suggests that the loss tolerance for one-sided DI QKD is higher than for fully DI QKD. More precisely, while DI QKD requires a detection-loophole-free violation of a Bell inequality, for one-sided DI QKD a loophole-free violation of a steering inequality is sufficient, and such a violation has recently been shown~\cite{wittmann12}.}

Our analysis of BB84 QKD with one-sided DI security admits a noise level of up to $1.5\%$. This is significantly lower than the $11\%$ tolerable for standard (i.e.~not DI) security. We believe that this is not inherent to the scheme but an artifact of our analysis. Improving this bound by means of a better analysis is an open problem (it {\em can} be slightly improved by using a better scheme, e.g., the six-state scheme~\cite{bruss98}). Nonetheless, one-sided DI QKD appears to be an attractive alternative to DI QKD in an asymmetric setting, when we can expect from one party, say, a server, to invest in a very carefully designed, constructed, and tested apparatus, but not the other party, the user, and/or in case of a star network with one designated link being connected with many other links. 

A comparison 
to other recent results on device-independent QKD is given in Table~\ref{t1}. {The noise tolerance is determined using isotropic noise.}

\begin{table}[t]
\footnotesize
\begin{tabular}{c|c|c|c|c}
 & Reichhardt\,\emph{et\,al.}~\cite{reichardt12} & Vazirani/Vidick~\cite{vidick12} & this work & Tomamichel\,\emph{et\,al.}~\cite{tomamichellim11}\footnote{\scriptsize For comparison, this proof achieves maximum noise tolerance and key rate for BB84. See also~\cite{branciard12}.} \\ \hline
 protocol & E91-based~\cite{ekert91} & E91-based & BBM92~\cite{bennett92}/BB84~\cite{bb84} & asymmetric BB84~\cite{lo04}  \\
 device assumptions & none & none & trusted Alice\footnotemark\footnotetext[2]{\scriptsize Combining our results with results on self-testing in~\cite{haenggi11,lim12}, one can reduce the assumption to memoryless for Alice.}
& trusted Alice\footnotemark[2], \\
 & & & & memoryless Bob \\
 noise tolerance & 0\% & 1.2\% & 1.5\% & 11\% \\
 key rate (zero noise) & 0\% & 2.5\% & 22.8\%/11.4\%\footnote{\scriptsize This loss of a factor $\frac{1}{2}$ is due to sifting when moving from BBM92 to BB84.} & 1\\
 finite key analysis & no & no & yes & yes 
\end{tabular}
\caption{Comparison of Recent Fully and Partially Device-Independent Security Proofs for QKD.}
\label{t1}
\end{table}

\subsubsection*{Position Verification}

Our second application is to the task of {\em position verification}. Here, we consider a
$1$-dimensional setting where a {\em prover} wants to convince two {\em verifiers} that he controls a certain position, $\pos$. The verifiers are located at known positions around $\pos$, honest, and connected by secure communication channels. Moreover, all parties
are assumed to have synchronized clocks, and the message delivery time between any two parties
is assumed to be proportional to the distance between them. Finally, all local computations are assumed to be instantaneous.

Position verification and variants thereof (like {\em distance bounding}) is a rather well-studied problem in the field of wireless security (see e.g.~\cite{chandran09}).
It was shown in~\cite{chandran09} that in the presence of colluding adversaries at different locations, position verification 
is impossible classically, even with computational hardness assumptions. That is, the prover can always trick the verifiers into believing
that he controls a position.
The fact that the classical attack requires the adversary to {\em copy} information, initially gave  hope that we may circumvent the impossibility result 
using quantum communication~\cite{kent06,malaney10a,malaney10b,chandran10,kent10}. However,
such schemes were subsequently broken~\cite{lau11} and indeed a general impossibility proof holds~\cite{buhrman11}: 
without any restriction on the adversaries, in particular on the amount of pre-shared entanglement they may hold, no quantum scheme for position verification can be secure.
This impossibility proof was constructive but required the dishonest parties to share a number of EPR pairs that grows doubly-exponentially in the number of qubits the honest parties exchange. Using port-based teleportation, as introduced by Ishizaka and Hiroshima~\cite{ishizaka08,ishizaka09}, this was reduced by Beigi and K\"onig~\cite{beigi11} to a single exponential amount. On the other hand, there are schemes for position verification that are provably secure against adversaries that have no pre-shared entanglement, or only hold a couple of entangled qubits~\cite{chandran10,buhrman11,lau11,beigi11}.

However, all known schemes that are provably secure with a negligible soundness error (the maximal probability that a coalition of adversaries can pass the position verification test for position $\pos$ without actually controlling that specific position) against adversaries with no or with bounded pre-shared entanglement are either {\em multi-round} schemes, or require the honest participants to manipulate large quantum states.

\begin{itemize}
	\item \textbf{Application to Position Verification (informal):} We present 
the first provably secure \emph{one-round} position verification scheme with negligible soundness error in which the honest parties are only required to perform single qubit operations. 
We prove its security against adversaries with an amount of pre-shared entanglement that is 
\emph{linear} in the number of qubits transmitted by the honest parties.
\end{itemize}

\subsubsection*{Entropic Uncertainty Relation}

The final application of our monogamy game is to entropic uncertainty relations with quantum side information~\cite{berta10}.
Our result is in the spirit of~\cite{tomamichel11,coles11} which 
shows an uncertainty relation for a tripartite state $\rho_{ABC}$ for measurements on $A$, trading off the uncertainty between the two observers $B$ and $C$ as in our monogamy game.
\vspace{-0.2cm}
\begin{itemize}
	\item \textbf{Application to Entropic Uncertainty Relations:}
	For any two general (POVM) measurements, $\{ N_x^0 \}_x$ and $\{ N_x^1 \}_x$, we find
  \begin{align*}
    H_{\min}(X|B \Theta)_{\rho} + H_{\min}(X|C \Theta)_{\rho} \geq -2 \log \frac{1 + \sqrt{c}}{2} ,
     \quad \textrm{where} \quad c = \max_{x,z} \Big\| \sqrt{N_x^0}\sqrt{N_z^1} \Big\|^2 .
  \end{align*}
  The entropies are evaluated for the post-measurement state
    $\rho_{XBC\Theta}$, where $X$ is the outcome of the measurement
  $\{ N_x^\theta \}_x$, where $\Theta \in \01$ is chosen uniformly at random.
\end{itemize}
\vspace{-0.2cm}

\subsection{Outline}

The remainder of this manuscript is structured as follows. In Section~\ref{sc:pre}, we
introduce the basic terminology and notation used throughout this work. In Section~\ref{sec:GuessingGame}, we discuss the monogamy game and prove a strong parallel repetition theorem. Here, we also generalize the game to include the case where Bob and Charlie are allowed to have some errors in their guess and show
an upper bound on the winning probability for the generalized game. Sections~\ref{sc:qkd},~\ref{sc:pv} and~\ref{sc:ucr} then apply these results to prove security for one-sided device independent QKD, a one-round position verification scheme and an entropic uncertainty relation.

\section{Technical Preliminaries}\label{sc:pre}

\subsection{Basic Notation and Terminology}

Let $\sH$ be an arbitrary, finite dimensional Hilbert space.
$\olin{\sH}$ and $\opos{\sH}$ denote {\em linear} and {\em positive semi-definite} operators on $\sH$, respectively. Note that an operator $A \in \opos{\sH}$ is in particular {\em Hermitian}, meaning that $A^\dagger = A$. 
The set of {\em density operators} on $\sH$, i.e., the set of operators in $\opos{\sH}$ with unit trace, is denoted by $\dens(\sH)$.
For $A,B \in \olin{\sH}$, we write $A \geq B$ to express that $A - B \in \opos{\sH}$.
When operators are compared with scalars, we
implicitly assume that the scalars are multiplied by the identity operator, which we denote by~$\id_{\sH}$, or 
$\id$ if $\sH$ is clear from the context. 
A {\em projector}
is an
operator $P \in \opos{\sH}$ that satisfies $P^2 = P$.
A {\em POVM} (short for {\em positive operator valued measure}) is a set $\{ N_x \}_x$ of operators $N_x \in \opos{\sH}$ such that $\sum_x N_x = \id$, and 
a POVM is called \emph{projective} if all its elements $N_x$ are projectors.
We use the {\em trace distance} 
$$\Delta(\rho,\sigma) := \max_{0 \leq E \leq 1} \tr{E (\rho-\sigma)} = \frac{1}{2} \Tr \abs{\rho-\sigma}, \quad \textrm{where}\ \abs{L} = \sqrt{L^\dagger L}, $$ 
as a metric on density operators $\rho,\sigma \in \dens(\sH)$.

The most prominent example of a Hilbert space is the qubit, $\sH \equiv \bC^2$.
The vectors $\ket{0}$ and $\ket{1}$ form its {\em rectilinear} (or computational) basis, and the vectors 
$H\ket{0} = (\ket{0}+\ket{1})/\sqrt{2}$ and $H\ket{1} = (\ket{0}-\ket{1})/\sqrt{2}$ form 
its {\em diagonal} (or Hadamard) basis, where $H$ denotes the Hadamard matrix.
More generally, we often consider systems composed of $n$ qubits, $\sH \equiv \bC^2 \otimes \cdots \otimes \bC^2$.
For $x, \theta \in \{0,1\}^n$, we write $\ket{x^\theta}$ as a shorthand for the state vector 
$H^{\theta_1}\ket{x_1} \otimes \cdots \otimes H^{\theta_n}\ket{x_n} \in \sH$.

\subsection{The Schatten $\infty$-Norm}

For $L \in \olin{\sH}$, we use the Schatten $\infty$-norm $\norm{L} := \norm{L}_{\infty} = s_1(L)$, which evaluates the largest singular
value of $L$. 
It is easy to verify that this norm satisfies $\norm{L}^2 = \norm{L^{\dagger}L} = \norm{LL^{\dagger}}$. Also,
for $A, B \in \opos{\sH}$, $\norm{A}$ coincides the largest eigenvalue of $A$, and $A \leq B$ implies $\norm{A} \leq \norm{B}$. Finally, for block-diagonal operators we have
$\norm{A \oplus B} = \max \{ \norm{A}, \norm{B} \}$. We will also need the following norm inequality.





\begin{lemma}
  \label{lm:norm-monotonicity}
Let $A, B, L \in \olin{\sH}$ such that $A^{\dagger} A \geq B^{\dagger} B$. Then, it holds that $\norm[\big]{A L} \geq \norm[\big]{B L}$.
\end{lemma}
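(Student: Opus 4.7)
The plan is to reduce the norm inequality to a standard monotonicity statement for the largest eigenvalue of positive semi-definite operators, using the identity $\norm{M}^2 = \norm{M^\dagger M}$ noted just before the lemma.

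First, I would rewrite both sides in a form amenable to the hypothesis. Since $\norm{AL}^2 = \norm{(AL)^\dagger (AL)} = \norm{L^\dagger A^\dagger A L}$ and likewise $\norm{BL}^2 = \norm{L^\dagger B^\dagger B L}$, it suffices to prove
\begin{equation*}
  \norm[\big]{L^\dagger A^\dagger A L} \geq \norm[\big]{L^\dagger B^\dagger B L}.
\end{equation*}

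Next, I would use the fact that conjugation preserves operator inequalities: from $A^\dagger A \geq B^\dagger B$ and the definition of the positive semi-definite order, for every vector $\ket{\psi}$,
\begin{equation*}
  \bramatket{\psi}{L^\dagger A^\dagger A L}{\psi} = \bramatket{L\psi}{A^\dagger A}{L\psi} \geq \bramatket{L\psi}{B^\dagger B}{L\psi} = \bramatket{\psi}{L^\dagger B^\dagger B L}{\psi},
\end{equation*}
so $L^\dagger A^\dagger A L \geq L^\dagger B^\dagger B L$, and both operators are in $\opos{\sH}$.

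Finally, I would invoke the observation (already stated in the paragraph above the lemma) that for positive semi-definite operators $X \geq Y \geq 0$, the Schatten $\infty$-norm agrees with the largest eigenvalue and is monotone, so $\norm{X} \geq \norm{Y}$. Applying this to $X = L^\dagger A^\dagger A L$ and $Y = L^\dagger B^\dagger B L$, combined with the rewriting above, yields $\norm{AL}^2 \geq \norm{BL}^2$, and taking square roots gives the claim. There is no genuine obstacle here; the only subtlety is to remember that $\norm{\cdot}$ coincides with the largest eigenvalue only on positive semi-definite operators, which is exactly why the reduction to $L^\dagger A^\dagger A L$ and $L^\dagger B^\dagger B L$ is the right move.
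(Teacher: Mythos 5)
Your proposal is correct and follows essentially the same route as the paper: conjugate the operator inequality $A^\dagger A \geq B^\dagger B$ by $L$ to obtain $L^\dagger A^\dagger A L \geq L^\dagger B^\dagger B L$, then apply monotonicity of the Schatten $\infty$-norm on positive semi-definite operators together with $\norm{ML}^2 = \norm{L^\dagger M^\dagger M L}$. The only difference is that you spell out the vector-level justification of the conjugation step and the final square-root, which the paper leaves implicit.
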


\begin{proof}
	First, note that $A^{\dagger} A \geq B^{\dagger} B$ implies that $L^{\dagger} A^{\dagger} A L \geq L^{\dagger} B^{\dagger} B L$ holds for an arbitrary linear operator $L$. By taking the norm we arrive at $\norm[\big]{L^{\dagger} A^{\dagger} A L} \geq \norm[\big]{L^{\dagger} B^{\dagger} B L}$, which is equivalent to $\norm[\big]{A L} \geq \norm[\big]{B L}$.
\end{proof}

In particular, if $A, A', B, B' \in \opos{\sH}$ satisfy $A' \geq A$ and $B' \geq B$ then applying the lemma twice (to the square roots of these operators) gives $\norm[\big]{\sqrt{A'} \sqrt{B'}} \geq \norm[\big]{\sqrt{A'} \sqrt{B}} \geq \norm[\big]{\sqrt{A} \sqrt{B}}$. For projectors the square roots can be omitted.

One of our main tools is the following Lemma~\ref{lm:mykittaneh}, which bounds the Schatten norm of the sum of $n$ positive semi-definite operators by means of their pairwise products. We derive the bound using a construction due to Kittaneh~\cite{kittaneh97}, which was also used by Schaffner~\cite{schaffner07} to derive a similar, but less general, result. 

We call two permutations $\pi: [N] \to [N]$ and $\pi': [N] \to [N]$ of the set $[N] := \{1,\ldots,N\}$ {\em orthogonal} if $\pi(i) \neq \pi'(i)$ for all $i \in [N]$. There always exists a set of $N$ permutations of $[N]$ that are
mutually orthogonal (for instance the $N$ cyclic shifts). 
%
\begin{lemma}
  \label{lm:mykittaneh}
  Let $A_1, A_2, \ldots, A_N \in \opos{\sH}$, and let $\{ \pi^k \}_{k \in [N]}$ be a set of $N$ mutually orthogonal permutations of $[N]$. Then,
  \begin{align}
    \norm[\Bigg]{\sum_{i \in [N]} A_i} \leq \sum_{k \in [N]}\, \max_{i \in [N]}
    \norm[\Big]{\sqrt{A_{i\phantom{\pi^k(i)}\!\!\!\!\!\!\!\!\!\!\!\!\!}}\sqrt{A_{\pi^k(i)}}} \label{lm1}\,.
  \end{align}
\end{lemma}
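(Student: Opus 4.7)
My plan is to follow the block-matrix construction of Kittaneh. The first step is to introduce the operator $T$ acting on the direct sum $\sH^{\oplus N}$ whose first block-column has entries $\sqrt{A_1}, \sqrt{A_2}, \ldots, \sqrt{A_N}$ stacked vertically and whose remaining block-columns vanish. A direct computation shows that $T^\dagger T$ has a single nonzero block, namely $\sum_i A_i$ in position $(1,1)$, while $T T^\dagger$ is the full $N\times N$ block matrix $M$ whose $(i,j)$-block equals $\sqrt{A_i}\sqrt{A_j}$. Since $\norm{T^\dagger T} = \norm{T T^\dagger}$ for the Schatten $\infty$-norm, this reduces the lemma to establishing the estimate $\norm{M} \le \sum_k \max_i \norm{\sqrt{A_i}\sqrt{A_{\pi^k(i)}}}$.

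To obtain this estimate, I would decompose $M$ into $N$ pieces indexed by the orthogonal permutations. For each $k \in [N]$ define $M^k$ to be the block matrix whose $(i,\pi^k(i))$-block equals $\sqrt{A_i}\sqrt{A_{\pi^k(i)}}$ for every $i \in [N]$, with all other blocks zero. Mutual orthogonality, i.e.\ $\pi^k(i)\neq \pi^{k'}(i)$ whenever $k\neq k'$, means that for every fixed $i$ the map $k\mapsto \pi^k(i)$ is a bijection on $[N]$; consequently, each block-position of $M$ is covered by exactly one $M^k$, so that $M=\sum_k M^k$. The triangle inequality then yields $\norm{M}\le \sum_k \norm{M^k}$.

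The last step is to evaluate $\norm{M^k}$. Since $M^k$ has the pattern of a block-permutation matrix (exactly one nonzero block per block-row and per block-column, placed according to $\pi^k$), the product $(M^k)^\dagger M^k$ is block-diagonal with diagonal blocks of the form $\sqrt{A_{\pi^k(i)}} A_i \sqrt{A_{\pi^k(i)}}$. The operator norm of a block-diagonal operator is the maximum of the norms of its blocks, and combining this with the identity $\norm{B}^2 = \norm{B^\dagger B}$ gives $\norm{M^k} = \max_{i\in [N]} \norm[\big]{\sqrt{A_i}\sqrt{A_{\pi^k(i)}}}$. Substituting this into the previous inequality completes the argument.

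The bulk of the work is bookkeeping; the one step I would take special care with is the claim that the supports of $M^1,\ldots,M^N$ exactly partition the $N^2$ block-positions of $M$, since this is where the hypothesis of mutual orthogonality (rather than just pairwise distinctness for some pairs) enters. Failing it would either double-count positions, so that the triangle-inequality bound is no longer tight enough, or omit positions, so that the decomposition $M=\sum_k M^k$ is false.
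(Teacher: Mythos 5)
Your proposal is correct and follows essentially the same route as the paper: the Kittaneh block-matrix construction with $\sqrt{A_i}$ in the first block-column, the identity $\norm{T^\dagger T}=\norm{TT^\dagger}$, the decomposition of the Gram-type matrix into $N$ pieces indexed by the mutually orthogonal permutations, and the triangle inequality. The only (cosmetic) difference is that you evaluate $\norm{M^k}$ by computing $(M^k)^\dagger M^k$ directly as a block-diagonal operator, whereas the paper conjugates $M^k$ by a block-permutation unitary to put it in block-diagonal form; both give $\norm{M^k}=\max_i\norm[\big]{\sqrt{A_i}\sqrt{A_{\pi^k(i)}}}$.
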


\begin{proof}
We define $X = [X_{ij}]$ as the $N \times N$ block-matrix with blocks given by $X_{ij} = \delta_{j1} \sqrt{A_i}$. Then, the matrices $X^{\dagger}X$ and $X X^{\dagger}$ are easy to evaluate, namely, $(X^{\dagger}X)_{ij} = \delta_{i1}\delta_{j1} \sum_i A_i$, as well as $XX^{\dagger}$ and $(XX^{\dagger})_{ij} = \sqrt{A_i}\sqrt{A_j}$. We have
\begin{align}
  \norm[\Bigg]{\sum_{i \in [N]} A_i} = \norm[\big]{X^{\dagger}X} = \norm[\big]{XX^{\dagger}} \nonumber\,.
\end{align}
Next, we decompose $XX^{\dagger} = D_1 + D_2 + \ldots D_N$, where the matrices $D_k$ are defined by the permutations $\pi^k$, respectively, as $(D_k)_{ij} = \delta_{j,\pi^k(i)} \sqrt{A_i}\sqrt{A_j}$. Note that
the requirement that the permutations are mutually orthogonal ensures that $XX^{\dagger} = \sum_k D_k$.
Moreover, since the matrices $D_k$ are constructed such that they contain exactly one non-zero 
block in each row and column, they can be transformed into a block-diagonal matrix 
$$D_k' = \bigoplus_{i \in [N]} \sqrt{A_{i\phantom{\pi^k(i)}\!\!\!\!\!\!\!\!\!\!\!\!}}\sqrt{A_{\pi^k(i)}}$$ 
by a unitary rotation. Hence, using the triangle inequality and unitary invariance of the norm, we get
$\norm[\big]{\sum_{k} A_{k}} \leq \sum_k \norm[\big]{D_k} = \sum_k \norm[\big]{D_k'}$, which implies~\eqref{lm1} since
$\norm[\big]{\bigoplus_i L_i} = \max_i \big\{ \norm{L_i} \big\}$.
\end{proof}

\noindent
A special case of the above lemma states that $\norm[\big]{A_1 + A_2} \leq \max \{ \norm{A_1}, \norm{A_2} \} + \norm[\big]{\sqrt{A_1}\sqrt{A_2}}$.

\subsection{CQ-States, and Min-Entropy}

A state $\rho_{XB} \in \dens(\sH_X \otimes \sH_B)$ is called a {\em classical-quantum} (CQ) state with classical $X$ over $\cal X$, if it is of the form 
$$
\rho_{XB} = \sum_{x\in\cX} p_x \proj{x}_X \otimes \rho_B^x \, ,
$$
where $\{ \ket{x} \}_{x \in {\cal X}}$ is a fixed basis of $\sH_X$, $\{ p_x \}_{x \in {\cal X}}$ is a probability distribution,  and $\rho_B^x \in \dens(\sH_B)$. 
For such a state, $X$ can be understood as a random variable that is correlated with (potentially quantum) side information $B$. 

If $\lambda: {\cal X} \to \{0,1\}$ is a predicate on $\cal X$, then we denote by $\Pr_{\rho}[\lambda(X)]$ the probability of the \emph{event} $\lambda(X)$ under $\rho$; formally,
$\Pr_{\rho}[\lambda(X)] = \sum_x p_x\, \lambda(x)$. We also define the state $\rho_{XB|\lambda(X)}$, which
is the state of the $X$ and $B$ conditioned on the event $\lambda(X)$. Formally,
$$
\rho_{XB|\lambda(X)} = \frac{1}{\Pr_{\rho}[\lambda(X)]} \sum_x p_x \lambda(x) \proj{x}_X \otimes \rho_B^x \, .
$$


For a CQ-state $\rho_{XB} \in \dens(\sH_X \otimes \sH_B)$, the {\em min-entropy} of $X$ conditioned on $B$ \cite{renner05}
can be expressed in terms of the maximum probability that a measurement on $B$ yields the correct
value of $X$, i.e.\ the guessing probability. Formally, we define~\cite{koenig08}
\begin{align*}
  H_{\min}(X|B)_{\rho} := - \log \pg(X|B)_{\rho}, \quad \textrm{where}\quad \pg(X|B)_{\rho} :=
  \max_{ \{ N_x \}_x } \sum_x p_x\, \tr{\rho_B^x N_x}.
\end{align*}
Here, the optimization is taken over all POVMs $\{ N_x \}_x$ on $B$, and here and throughout this paper, $\log$ denotes the binary logarithm. 

In case of a CQ-state $\rho_{XB\Theta}$ with classical $X$, and with additional classical side information $\Theta$, 
we can write 
$\rho_{XB\Theta} = \sum_{\theta} p_{\theta}\, \proj{\theta} \otimes \rho_{XB}^{\theta}$ for CQ states $\rho_{XB}^{\theta}$.
The min-entropy of $X$ conditioned on $B$ and $\Theta$ then evaluates to
\begin{align}
  H_{\min}(X|B\Theta)_{\rho} = - \log \pg(X|B\Theta)_{\rho}, \quad \textrm{where} 
  \quad \pg(X|B\Theta)_{\rho} =
  \sum_\theta p_\theta\, \pg(X|B)_{\rho^{\theta}} \, .
  \label{eq:hmin-theta}
\end{align}
An intuitive explanation of the latter equality is that
the optimal strategy to guess $X$ simply chooses an optimal POVM on $B$ depending on the value of $\Theta$. 

An overview of the min-entropy and its properties can be found in~\cite{renner05,mythesis}; we merely point out the {\em chain rule} here: for a CQ-state $\rho_{XB\Theta}$ with classical $X$ and $Y$, where $Y$ is over an arbitrary set $\mathcal{Y}$ with cardinality $|\mathcal{Y}|$, it holds that $H_{\min}(X|BY)_{\rho} \geq H_{\min}(X|B)_{\rho} - \log |\mathcal{Y}|$.


\section{Parallel Repetition of Monogamy Games}\label{sec:GuessingGame}

In this section, we investigate and show strong parallel repetition for the game $\mathsf{G}_{\BB}$. Then, we generalize our analysis to allow arbitrary measurements for Alice and consider the situation where Bob and Charlie are allowed
to make some errors. But to start with, we need some formal definitions. 

\begin{definition}
A \emph{monogamy-of-entanglement game} $\mathsf{G}$ consists of a finite dimensional Hilbert space $\sH_A$ and a list of measurements $\mathcal{M}^\theta = \{ F_x^{\theta} \}_{x \in \cX}$ on a $\sH_A$, indexed by $\theta \in \Theta$, where $\cX$ and $\Theta$ are finite sets. 
\end{definition}
\noindent 
We typically use less bulky terminology and simply call $\mathsf{G}$ a {\em monogamy game}. 
Note that for any positive integer $n$, the $n$-fold {\em parallel repetition} of $\mathsf{G}$, denoted as $\mathsf{G}^{\times n}$ and naturally specified by $\sH_A^{\otimes n}$ and $\{ F_{x_1}^{\theta_1} \otimes \cdots \otimes F_{x_n}^{\theta_n} \}_{x_1,\ldots,x_n}$ for $\theta_1,\ldots,\theta_n \in \Theta$, is again a monogamy game. 


\begin{definition}
We define a \emph{strategy} ${\cal S}$ for a monogamy game $\mathsf{G}$ as a list 
\begin{align}
{\cal S} = \big\{ \rho_{ABC},\, P_x^{\theta} ,\, Q_x^{\theta} \big\}_{\theta \in \Theta,x \in \cX}\ , \label{eq:strat}
\end{align}
where $\rho_{ABC} \in \dens(\sH_{A} \otimes \sH_B \otimes \sH_C)$, and $\sH_B$ and $\sH_C$ are arbitrary finite dimensional Hilbert spaces. 
Furthermore, for all $\theta \in \Theta$,
$\{ P^\theta_x \}_{x \in \cX}$ and $\{ Q^\theta_x \}_{x \in \cX}$ are POVMs on $\sH_B$ and $\sH_C$, respectively. \\
A strategy is called  \emph{pure} if the state $\rho_{ABC}$ is pure and all the POVMs are projective. 
\end{definition}
If ${\cal S}$ is a strategy for game $\mathsf{G}$, then the $n$-fold parallel repetition of $\cal S$, which is naturally given, is a particular strategy for the parallel repetition $\mathsf{G}^{\times n}$; however, it is important to realize that there exist strategies for $\mathsf{G}^{\times n}$ that are not of this form. 
In general, a strategy ${\cal S}_n$ for $\mathsf{G}^{\times n}$ is given by an arbitrary state $\rho_{A_1 \ldots A_n BC} \in \dens(\sH_A^{\otimes n} \otimes \sH_B \otimes \sH_C)$ (with arbitrary $\sH_B$ and $\sH_C$) and by arbitrary POVM elements on $\sH_B$ and $\sH_C$, respectively, not necessarily in product form.

The winning probability for a game $\mathsf{G}$ and a fixed strategy ${\cal S}$, denoted by $\pwin(\mathsf{G}, {\cal S})$, is defined as the probability that the measurement outcomes of Alice, Bob and Charlie agree when Alice measures in the basis determined
by a randlomly chosen $\theta \in \Theta$ and Bob and Charlie apply their respective POVMs $\{ P_x^{\theta} \}_x$ and $\{ Q_x^{\theta} \}_x$. 
The optimal winning probability, $\pwin(\mathsf{G})$, maximizes the winning probability over all strategies. The following makes this formal. 

\begin{definition}
  The winning probability for a monogamy game 
  $\mathsf{G}$ and a strategy ${\cal S}$ is defined as
  \begin{align}
    \pwin(\mathsf{G}, {\cal S}) := \sum_{\theta \in \Theta} \frac{1}{\abs{\Theta}} \Tr \big(\Pi^\theta \rho_{A BC}\big),
    \nbox{ where }{8} \Pi^\theta := \sum_{x \in \mathcal{X}} F_{x}^{\theta}  \otimes P_x^\theta \otimes Q_x^\theta .
    \label{eq:winS}
  \end{align}     
  The optimal winning probability is
  \begin{align}
    \pwin(\mathsf{G}) := \sup_{{\cal S}}\ \pwin(\mathsf{G}, {\cal S}) ,
    \label{eq:win}
  \end{align}
  where the supremum is taken over all strategies ${\cal S}$ for $\mathsf{G}$.
\end{definition}

In fact, due to a standard purification argument and Neumark's dilation theorem, we can restrict the supremum to pure strategies (cf.\ Lemma~\ref{lm:pure} in Appendix~\ref{app:pure}).

\subsection{Strong Parallel Repetition for $\mathsf{G}_{\BB}$}

%

We are particularly interested in the game $\mathsf{G}_{\BB}$ and its parallel repetition $\mathsf{G}_{\BB}^{\times n}$. The latter is given by $\sH_A = (\bC^2)^{\otimes n}$ and the projectors $F_{x}^{\theta} = \proj{x^\theta} = H^{\theta_1}\proj{x_1}H^{\theta_1} \otimes \cdots \otimes H^{\theta_n}\proj{x_n}H^{\theta_n}$ for $\theta,x \in \01^n$. 
The following is our main result. 

\begin{theorem}
  \label{th:perfect}
  For any $n \in \bN$, $n \geq 1$, we have
  \begin{align}
    \pwin(\mathsf{G}_{\BB}^{\times n}) = \bigg( \frac{1}{2} + \frac{1}{2 \sqrt{2}} \bigg)^n \, . \label{thm}
  \end{align}
\end{theorem}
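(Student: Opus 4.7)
The plan is to establish matching upper and lower bounds, and then identify which step carries the real conceptual weight.

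\textbf{Lower bound.} First I would give an explicit product strategy. Let $|\psi\rangle = \cos(\pi/8)|0\rangle + \sin(\pi/8)|1\rangle$ be the Breidbart state. Bob and Charlie prepare $\rho_A = |\psi\rangle\!\langle\psi|^{\otimes n}$, keep nothing, and both always output $x = 0^n$. Since $|\langle 0|\psi\rangle|^2 = |\langle +|\psi\rangle|^2 = \cos^2(\pi/8) = \tfrac{1}{2} + \tfrac{1}{2\sqrt 2}$, each qubit independently contributes this factor and $\pwin(\mathsf{G}_{\BB}^{\times n}) \geq \bigl(\tfrac{1}{2} + \tfrac{1}{2\sqrt 2}\bigr)^n$, with no memory and no entanglement.

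\textbf{Upper bound, reduction to an operator norm.} By Lemma~\ref{lm:pure} it suffices to consider pure strategies, so each $P_x^\theta, Q_x^\theta$ is a projector. Because $F_x^\theta F_{x'}^\theta = \delta_{xx'}F_x^\theta$, each $\Pi^\theta$ is then a sum of pairwise orthogonal projectors and hence itself a projector. Since $\rho_{ABC}$ is a density operator and $\sum_\theta \Pi^\theta \geq 0$, the definition of $\pwin$ yields
\begin{equation*}
\pwin(\mathsf{G}_{\BB}^{\times n}) \;\leq\; \frac{1}{2^n}\, \bigg\| \sum_{\theta \in \01^n} \Pi^\theta \bigg\|.
\end{equation*}

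\textbf{Applying Kittaneh.} I would apply Lemma~\ref{lm:mykittaneh} with $N = 2^n$ and the permutations $\pi^k(\theta) = \theta \oplus k$ indexed by $k \in \01^n$; these $2^n$ permutations are mutually orthogonal because $\theta \oplus k \neq \theta \oplus k'$ for $k \neq k'$. Using $\sqrt{\Pi^\theta} = \Pi^\theta$, this gives
\begin{equation*}
\bigg\| \sum_\theta \Pi^\theta \bigg\| \;\leq\; \sum_{k \in \01^n} \max_{\theta} \big\| \Pi^\theta\, \Pi^{\theta \oplus k} \big\|.
\end{equation*}

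\textbf{The key pairwise bound.} The heart of the proof is the operator inequality
\begin{equation*}
\Pi^\theta\, \Pi^{\theta'}\, \Pi^\theta \;\leq\; 2^{-d(\theta,\theta')}\, \Pi^\theta,
\end{equation*}
where $d$ is the Hamming distance. Expanding the triple product and using projectivity on $A$ via $F_x^\theta F_y^{\theta'} F_x^\theta = |\langle x^\theta|y^{\theta'}\rangle|^2 F_x^\theta$ collapses the outer sums, leaving
$\Pi^\theta\Pi^{\theta'}\Pi^\theta = \sum_x F_x^\theta \otimes \Lambda_x$ with $\Lambda_x = \sum_y w_{x,y}\, (P_x^\theta P_y^{\theta'} P_x^\theta) \otimes (Q_x^\theta Q_y^{\theta'} Q_x^\theta)$ and $w_{x,y} = |\langle x^\theta|y^{\theta'}\rangle|^2$. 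For BB84 the weights factorize qubitwise so that $\max_y w_{x,y} = 2^{-d(\theta,\theta')}$ uniformly in $x$. The delicate step is handling the $B\otimes C$ factor without losing the weight; I would use the tensor inequality
\begin{equation*}
\sum_y w_y\, A_y \otimes B_y \;\leq\; \bigl(\max_y w_y\bigr)\, \Bigl(\sum_y A_y\Bigr) \otimes \Bigl(\sum_y B_y\Bigr)
\end{equation*}
for positive $A_y, B_y, w_y$, which is immediate from $\sum_{y\neq y'} w_y A_y \otimes B_{y'} \geq 0$. Combined with $\sum_y P_x^\theta P_y^{\theta'} P_x^\theta = P_x^\theta$ (via POVM completeness $\sum_y P_y^{\theta'} = 1$ and projectivity of $P_x^\theta$), this gives $\Lambda_x \leq 2^{-d(\theta,\theta')} P_x^\theta \otimes Q_x^\theta$ and hence the claimed bound, so $\|\Pi^\theta \Pi^{\theta\oplus k}\| \leq 2^{-w(k)/2}$ independently of $\theta$.

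\textbf{Finishing.} The outer maximum over $\theta$ is then trivial, and the sum factorizes:
\begin{equation*}
\sum_{k \in \01^n} 2^{-w(k)/2} \;=\; \prod_{i=1}^n \bigl(1 + 2^{-1/2}\bigr) \;=\; \bigl(1 + 1/\sqrt{2}\bigr)^n,
\end{equation*}
and dividing by $2^n$ gives $\pwin(\mathsf{G}_{\BB}^{\times n}) \leq \bigl(\tfrac{1}{2} + \tfrac{1}{2\sqrt 2}\bigr)^n$, matching the lower bound.

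\textbf{The hard part.} The pairwise operator inequality in the third step is where all the subtlety sits. A naive expansion followed by $P_x^\theta P_y^{\theta'} P_x^\theta \leq P_x^\theta$ kills the weights $w_{x,y}$ before they are summed and collapses to the useless bound $\leq \Pi^\theta$. The weighted-tensor trick is what lets us deposit the overlap weight on one side of the $B \otimes C$ tensor while the other side is absorbed by POVM completeness, and it is precisely this that produces the exponent $-d(\theta,\theta')$ and, in turn, strong parallel repetition rather than merely some exponential decay.
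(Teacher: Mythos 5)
Your lower bound, the reduction to $\frac{1}{2^n}\norm[\big]{\sum_\theta \Pi^\theta}$, the application of Lemma~\ref{lm:mykittaneh} with the XOR permutations $\pi^k(\theta)=\theta\oplus k$, and the final binomial count all match the paper's proof. The genuine gap sits exactly at the step you call the heart of the argument: the claimed collapse $\Pi^\theta\Pi^{\theta'}\Pi^\theta=\sum_x F_x^\theta\otimes\Lambda_x$ is false for $n\ge 2$. Expanding the triple product gives outer indices $x$ and $z$ that are summed \emph{independently}, with $A$-part $\braket{x^\theta}{y^{\theta'}}\braket{y^{\theta'}}{z^\theta}\,\ketbra{x^\theta}{z^\theta}$; projectivity of the $F$'s cannot force $x=z$ because the middle factor $F_y^{\theta'}$ sits in between, and summing over $y$ does not cancel the $x\neq z$ terms either, since in $\mathsf{G}_{\BB}^{\times n}$ the POVMs $\{P_x^\theta\}_{x\in\01^n}$ and $\{Q_x^\theta\}_{x\in\01^n}$ are arbitrary measurements on $\sH_B$, $\sH_C$ and need not respect the round structure. (For $n=1$ the two cross terms do cancel: for $x\neq z$ each of the three factors $a_y=\braket{x^\theta}{y^{\theta'}}\braket{y^{\theta'}}{z^\theta}$, $P_x^\theta P_y^{\theta'}P_z^\theta$ and $Q_x^\theta Q_y^{\theta'}Q_z^\theta$ flips sign under $y\mapsto 1-y$, because each sums to zero over $y$. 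This coincidence may have suggested the identity, but it does not survive $n\ge 2$: already for $\theta=00$, $\theta'=11$ one can pick four-outcome projective measurements for Bob and Charlie, entangling the two rounds, for which the $(x,z)=(00,11)$ cross term is nonzero.) Consequently your derivation of $\Pi^\theta\Pi^{\theta'}\Pi^\theta\le 2^{-d(\theta,\theta')}\Pi^\theta$, and with it the pairwise norm bound, does not go through as written.

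The paper avoids precisely this obstacle by not multiplying the $\Pi$'s directly: it first enlarges $\Pi^\theta\le\bar P$ and $\Pi^{\theta'}\le\bar Q$, where $\bar P=\sum_x\ketbra{x_{\cT}^\theta}\otimes\id_{\cT^c}\otimes P_x^\theta\otimes\id_C$ acts trivially on $C$ and on the qubits outside $\cT$, and $\bar Q=\sum_x\ketbra{x_{\cT}^{\theta'}}\otimes\id_{\cT^c}\otimes\id_B\otimes Q_x^{\theta'}$ acts trivially on $B$. Lemma~\ref{lm:norm-monotonicity} gives $\norm[\big]{\Pi^\theta\Pi^{\theta'}}^2\le\norm[\big]{\bar P\bar Q\bar P}$, and in $\bar P\bar Q\bar P$ the outer sums collapse \emph{legitimately}, because the outer $B$-operators now multiply each other directly, $P_x^\theta P_z^\theta=\delta_{xz}P_x^\theta$, while the $C$-operator coming from $\bar Q$ is sandwiched between identities; this yields $\bar P\bar Q\bar P=2^{-t}\sum_x\ketbra{x_{\cT}^\theta}\otimes\id_{\cT^c}\otimes P_x^\theta\otimes\id_C$ and hence $\norm[\big]{\Pi^\theta\Pi^{\theta'}}\le 2^{-t/2}$. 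If you prefer your formulation, note that your operator inequality can be recovered \emph{from} this norm bound (a positive operator supported inside the range of $\Pi^\theta$ with norm at most $2^{-t}$ is $\le 2^{-t}\Pi^\theta$), but it has to be established by some such sandwiching argument; your weighted-tensor trick is correct in itself but is applied to an identity that does not hold.
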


\begin{proof}
We first show that this guessing probability can be achieved. For $n = 1$, consider
the following strategy. Bob and Charlie prepare the state $\ket{\phi} := \cos \frac{\pi}{8} \ket{0} + \sin \frac{\pi}{8} \ket{1}$ and send it to Alice. Then, they guess that Alice measures outcome $0$, independent of $\theta$.
Formally, this is the strategy
${\cal S}_1 = \big\{ \proj{\phi}, P_x^\theta = \delta_{x0}, Q_x^\theta = \delta_{x0} \big\}$. The optimal winning probability is thus bounded by the winning probability of 
this strategy, 
\begin{align*}
  \pwin(\mathsf{G}_{\BB}) \geq \Big(\cos \frac{\pi}{8} \Big)^2 = \frac{1}{2} + \frac{1}{2\sqrt{2}} \, ,
\end{align*}
and the lower bound on $\pwin$ implied by Eq.~\eqref{thm} follows by repeating this simple strategy $n$ times.

To show that this simple strategy is optimal, let us now fix an arbitrary, pure strategy ${\cal S}_n 
= \{ \rho_{A_1 \ldots A_n BC}, P_x^{\theta}, 
Q_x^{\theta} \}$. From the definition of the norm, we have $\tr{M \rho_{ABC}} \leq \norm{M}$ for any $M \geq 0$. Using this and
 Lemma~\ref{lm:mykittaneh}, we find 
\begin{align}
\pwin(\mathsf{G}_{\BB}^{\times n}, \mathcal{S}_n) =
\sum_\theta \frac{1}{2^n} \Tr\big(\Pi^\theta \rho_{A_1 \ldots A_n BC}\big) \leq 
\frac{1}{2^{n}}  \norm[\Big]{\sum_{\theta} \Pi^\theta} \leq 
\frac{1}{2^{n}} \sum_k\, \max_{\theta} \norm[\big]{\Pi^\theta \Pi^{\pi^k(\theta)}} \label{b1},
\end{align}
where the optimal permutations $\pi^k$ are to be determined later.
Hence, the problem is reduced to bounding the norms
$\norm[\big]{\Pi^\theta \Pi^{\theta'}}$, where $\theta' = \pi^k(\theta)$. The trivial upper bound on these norms, $1$, leads
to $\pwin(\mathsf{G}_{\textrm{BB84}}^{\times n}, \mathcal{S}_n) \leq 1$. However, most of these norms are actually very small as we see below.

For fixed $\theta$ and $k$, we denote by $\cT$ the set of indices where $\theta$ and $\theta'$ differ, by $\cT^c$ its complement, and by $t$ the Hamming distance between $\theta$ and $\theta'$ (hence, $t = \abs{\cT}$).
We consider the projectors
\begin{align}
  \bar{P} = \sum_{x} \ketbra{x_{\cT}^\theta} \otimes \id_{\cT^c} \otimes P_x^\theta \otimes \id_C \nbox{and}{8}
  \bar{Q} = \sum_{x} \ketbra{x_{\cT}^{\theta'}} \otimes \id_{\cT^c} \otimes \id_B \otimes Q_x^{\theta'} \nonumber,
\end{align}
where $\ket{x^\theta_{\cT}}$ is $\ket{x^\theta}$ restricted to the systems corresponding to rounds
with index in $\cT$, and $\id_{\cT^c}$ is the identity on the remaining systems.

Since $\Pi^\theta \leq \bar{P}$ and $\Pi^{\theta'}\! \leq \bar{Q}$, we can bound
$\norm[\big]{\Pi^\theta \Pi^{\theta'}}^2 \leq \norm[\big]{\bar{P} \bar{Q}}^{2} = \norm[\big]{\bar{P} \bar{Q} \bar{P}}$ using Lemma~\ref{lm:norm-monotonicity}.
Moreover, it turns out that the operator $\bar{P} \bar{Q} \bar{P}$ has a particularly simple form, namely
\begin{align*}
  \bar{P} \bar{Q} \bar{P} &= \sum_{x,y,z} \ket{x_{\cT}^{\theta}}\! \braket{x_{\cT}^{\theta}}{y_{\cT}^{\theta'}}\! \braket{y_{\cT}^{\theta'}}{z_{\cT}^{\theta}}\!\bra{z_{\cT}^{\theta}} \otimes \id_{\cT^c} \otimes P_x^{\theta} P_z^{\theta} \otimes Q_y^{\theta'} \\
  &= \sum_{x,y} \abs{\braket{x_{\cT}^\theta}{y_{\cT}^{\theta'}}}^2\, \ketbra{x_{\cT}^{\theta}} \otimes \id_{\cT^c} \otimes P_x^{\theta} \otimes Q_y^{\theta'} \\
  &= 2^{-t}\ \sum_{x} \ketbra{x_{\cT}^\theta} \otimes \id_{\cT^c} \otimes P_x^{\theta} \otimes \id_{C} ,
\end{align*}
where we used that $P_x^\theta P_z^\theta = \delta_{xz} P_x^\theta$ and $\abs{\braket{x_{\cT}^\theta}{y_{\cT}^{\theta'}}}^2 = 2^{-t}$. The latter relation follows from the fact that the two bases are diagonal
to each other on each qubit with index in $\cT$.
From this follows directly that $\norm{\bar{P} \bar{Q} \bar{P}} = 2^{-t}$. Hence, we find
$\norm[\big]{\Pi^\theta \Pi^{\theta'}} \leq \sqrt{2^{-t}}$. Note that this bound is independent of the strategy and only depends on the Hamming distance 
between $\theta$ and~$\theta'$.

To minimize the upper bound in~\eqref{b1}, we should choose permutations $\pi^k$ that produce tuples $(\theta, \theta' = \pi^k(\theta))$ with the same Hamming distance
as this means that the maximization is over a uniform set of elements.
A complete mutually orthogonal set of permutations with this property is given by the bitwise XOR, $\pi^k(\theta) = \theta \oplus k$, where we interpret $k$ as an element of $\{0, 1\}^n$. Using this construction, we get exactly ${n \choose t}$ permutations that create pairs with Hamming distance $t$, and
the bound in Eq.~\eqref{b1} evaluates to
\begin{align*}
\pwin(\mathsf{G}_{\BB}^{\times n}, \mathcal{S}_n) 
\leq \frac{1}{2^{n}} \sum_k\, \max_{\theta} \norm[\big]{\Pi^\theta \Pi^{\pi^k(\theta)}} 
\leq \frac{1}{2^{n}} \sum_{t=0}^n {n \choose t} \Big( \frac{1}{\sqrt{2}} \Big)^t  = \bigg( \frac12 + \frac{1}{2\sqrt{2}} \bigg)^n \, .
\end{align*}
Since this bound applies to all pure strategies, Lemma~\ref{lm:pure} concludes the proof.
\end{proof}

\subsection{Arbitrary Games, and Imperfect Guessing}
\label{sec:fixedError}

The above upper-bound techniques can be generalized to an arbitrary monogamy game, $\mathsf{G}$, specified by an arbitrary finite dimensional Hilbert space $\sH_A$ and arbitrary measurements $\{ F_x^{\theta} \}_{x \in \cal X}$, indexed by $\theta \in \Theta$, and with arbitrary finite $\cal X$ and $\Theta$.
The only additional
parameter relevant for the analysis is the {\em maximal overlap} of the measurements,
\begin{align}
  c(\mathsf{G}) := \max_{\theta,\theta' \in \Theta \atop \theta \neq \theta'} 
  \max_{x,\,x' \in \cX} \norm[\Big]{ \sqrt{F_x^\theta} \sqrt{F_{x'}^{\theta'}}}^2, \nonumber\,
\end{align}
which satisfies $1/\abs{\cX} \leq c(\mathsf{G}) \leq 1$ and $c(\mathsf{G}^{\times n}) = c(\mathsf{G})^n$. 
This is in accordance with the definition of the overlap as it appears in entropic uncertainty relations,
e.g.\ in~\cite{krishna01}. Note also that in the case of $\mathsf{G}_{\BB}$, we have
$c(\mathsf{G}_{\BB}) = \frac{1}{2}$.

In addition to considering arbitrary monogamy games, we also generalize Theorem~\ref{th:perfect} to the case where Bob and Charlie are not required to guess the outcomes {\em perfectly} but are allowed to make some errors. 
The maximal winning probability in this case is defined as follows, where we employ an argument analogous to Lemma~\ref{lm:pure} in order to restrict to pure strategies. 

\begin{definition}
Let $\cQ = \{ ( \pi_B^q, \pi_C^q ) \}_q$ be a set of pairs of
  permutations of $\cX$, indexed by $q$, with the meaning that in order to win, Bob and Charlie's respective guesses for $x$ must form a pair in $\{(\pi_B^q(x), \pi_C^q(x) ) \}_q$. 
  Then, the optimal winning probability of $\mathsf{G}$ with respect to $\cQ$ is
$$
\pwin(\mathsf{G}; \cQ) := \sup_{{\cal S}}\
    \sum_{\theta \in \Theta} \frac{1}{\abs{\Theta}} \tr{A^\theta \rho_{A BC}}
\quad\text{with}\quad
    A^\theta := \sum_{x \in \mathcal{X}}  F_{x}^{\theta} 
      \otimes \sum_{q} P_{\pi_B^q(x)}^\theta \otimes Q_{\pi_C^q(x)}^\theta   ,
$$
where the supremum is taken over all pure strategies ${\cal S}$ for $\mathsf{G}$. 
\end{definition}

We find the following upper bound on the guessing probability, generalizing the upper bound
on the optimal winning probability established in Theorem~\ref{th:perfect}.

\begin{theorem}
  \label{th:gen}
  For any positive $n \in \mathbb{N}$, we have
  \begin{align}
    \pwin(\mathsf{G}^{\times n}; \cQ) \leq \abs{\cQ}  \bigg( \frac{1}{\abs{\Theta}} + \frac{\abs{\Theta} - 1}{\abs{\Theta}}\, \sqrt{c(\mathsf{G})} \bigg)^n \nonumber\,.
  \end{align}
\end{theorem}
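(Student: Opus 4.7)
The plan is to mirror the structure of the proof of Theorem~\ref{th:perfect}, now accommodating both the set $\cQ$ and general measurements. After applying Lemma~\ref{lm:pure} and a Naimark dilation of Alice's POVMs (which preserves both $\pwin$ and the overlap $c(\mathsf{G})$), it suffices to consider pure strategies in which $F^\theta_{\vec x}$, $P^\theta_{\vec x}$ and $Q^\theta_{\vec x}$ are all projective. Expanding
\[
A^\theta = \sum_{q\in\cQ}\Pi^{\theta,q}, \qquad \Pi^{\theta,q}:=\sum_{\vec x\in\cX^n}F^\theta_{\vec x}\otimes P^\theta_{\pi_B^q(\vec x)}\otimes Q^\theta_{\pi_C^q(\vec x)},
\]
each $\Pi^{\theta,q}$ is itself a projector, and bounding $\tr(X\rho)\leq\|X\|$ for $X\geq 0$ followed by the triangle inequality yields
\[
\pwin(\mathsf{G}^{\times n};\cQ)\;\leq\;\frac{1}{|\Theta|^n}\sum_{q\in\cQ}\bigg\|\sum_{\theta\in\Theta^n}\Pi^{\theta,q}\bigg\|.
\]
Lemma~\ref{lm:mykittaneh} applied to each inner sum with the family of $|\Theta|^n$ mutually orthogonal permutations $\pi^k(\theta)=\theta\oplus k$ (under any abelian group structure on $\Theta$) reduces the task to bounding $\|\Pi^{\theta,q}\Pi^{\theta',q}\|$ where $\theta'=\theta\oplus k$ differs from $\theta$ in exactly $t=|\{i:k_i\neq 0\}|$ positions.

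The key technical step — where the main work lies — is to show $\|\Pi^{\theta,q}\Pi^{\theta',q}\|\leq c(\mathsf{G})^{t/2}$. Following the BB84 computation, introduce the enlargements
\[
\bar P^{\theta,q}:=\sum_{\vec x}F^\theta_{\vec x}\otimes P^\theta_{\pi_B^q(\vec x)}\otimes\id_C,\qquad \bar Q^{\theta',q}:=\sum_{\vec y}F^{\theta'}_{\vec y}\otimes\id_B\otimes Q^{\theta'}_{\pi_C^q(\vec y)},
\]
which are projectors dominating $\Pi^{\theta,q}$ and $\Pi^{\theta',q}$ respectively (each difference is a sum of tensor products of positive operators). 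Lemma~\ref{lm:norm-monotonicity} then gives $\|\Pi^{\theta,q}\Pi^{\theta',q}\|\leq\|\bar P^{\theta,q}\bar Q^{\theta',q}\|$, and since $\bar P^{\theta,q}$ is projective this last norm squared equals $\|\bar P^{\theta,q}\bar Q^{\theta',q}\bar P^{\theta,q}\|$. Expanding the triple product and using the product structure $F^\theta_{\vec x}=\bigotimes_{i=1}^{n} F^{\theta_i}_{x_i}$: on coordinates $i\in\cT^c$ (where $\theta_i=\theta'_i$), projective orthogonality forces $y_i=x_i$; on coordinates $i\in\cT$ (the $t$ positions where they disagree), the definition of $c(\mathsf{G})$ yields $F^{\theta_i}_{x_i}F^{\theta'_i}_{y_i}F^{\theta_i}_{x_i}\leq c(\mathsf{G})\,F^{\theta_i}_{x_i}$. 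Tensor-monotonicity $A_i\leq B_i\Rightarrow\bigotimes A_i\leq\bigotimes B_i$ extends this to $F^\theta_{\vec x}F^{\theta'}_{\vec y}F^\theta_{\vec x}\leq c(\mathsf{G})^{t}\,F^\theta_{\vec x}$ whenever $y_{\cT^c}=x_{\cT^c}$, and the surviving sum of $Q^{\theta'}_{\pi_C^q(y_\cT,x_{\cT^c})}$ over $y_\cT$ is a sub-sum of a POVM and hence $\leq\id_C$. The net estimate $\bar P^{\theta,q}\bar Q^{\theta',q}\bar P^{\theta,q}\leq c(\mathsf{G})^{t}\bar P^{\theta,q}$ yields the claimed bound.

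The remainder is pure bookkeeping: the number of $k\in\Theta^n$ with Hamming weight exactly $t$ is $\binom{n}{t}(|\Theta|-1)^{t}$, so
\[
\sum_{k}\max_\theta\|\Pi^{\theta,q}\Pi^{\pi^k(\theta),q}\|\;\leq\;\sum_{t=0}^{n}\binom{n}{t}(|\Theta|-1)^{t}\,c(\mathsf{G})^{t/2}\;=\;\bigl(1+(|\Theta|-1)\sqrt{c(\mathsf{G})}\bigr)^{n}
\]
by the binomial theorem, and combining with the outer sum over $|\cQ|$ terms and the $1/|\Theta|^n$ normalization produces exactly the claimed inequality. The principal obstacle is the $\bar P\bar Q\bar P$ computation above — extracting the factor $c(\mathsf{G})^{t}$ coordinate by coordinate from a product of projective measurements that are no longer rank one — together with the mild preliminary of reducing general POVMs to projective measurements via Naimark.
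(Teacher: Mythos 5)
Your overall route coincides with the paper's: decompose $A^\theta=\sum_q A^\theta_q$, use $\tr{X\rho}\le\norm{X}$ plus the triangle inequality over $q$, apply Lemma~\ref{lm:mykittaneh} with the componentwise-XOR permutations, dominate each $A^\theta_q$ by an operator that acts trivially on the coordinates where $\theta$ and $\theta'$ agree, and count the $\binom{n}{t}(\abs{\Theta}-1)^t$ permutations at Hamming distance $t$. The gap is in your preliminary step: you assume a Naimark dilation of Alice's POVMs that ``preserves both $\pwin$ and the overlap $c(\mathsf{G})$''. Dilating can only increase $\pwin$ (which is the harmless direction), but it does \emph{not} preserve the overlap, and in general it cannot: any two projective measurements $\{\tilde F^0_x\}_{x\in\cX}$, $\{\tilde F^1_z\}_{z\in\cX}$ on any Hilbert space satisfy $\max_{x,z}\|\tilde F^0_x\tilde F^1_z\|^2\ge 1/\abs{\cX}$ (take a unit vector in the range of some $\tilde F^0_x$ and sum $\langle v,\tilde F^1_z v\rangle$ over $z$), whereas the POVM overlap can be smaller, e.g.\ $F^0_x=F^1_x=\id/\abs{\cX}$ gives $c(\mathsf{G})=1/\abs{\cX}^2$. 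So your argument only yields the theorem with $c(\mathsf{G})$ replaced by the (possibly strictly larger) overlap of the dilated projective measurements, a weaker statement for general POVMs. And the projectivization is genuinely load-bearing in your key step: both the orthogonality argument on $\cT^c$ and the inequality $F\,G\,F\le\|\sqrt F\sqrt G\|^2\,F$ require $F$ to be a projector, so you cannot simply drop the dilation.

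The paper circumvents this by never projectivizing Alice: only Bob's and Charlie's measurements are made projective (Lemma~\ref{lm:pure}), Lemma~\ref{lm:mykittaneh} is used in its square-root form on the operators $A^\theta_q$, and the dominating operators $B\ge A^\theta_q$ and $C\ge A^{\theta'}_q$ carry $\bigotimes_{\ell\in\cT}F^{\theta_\ell}_{x_\ell}$ (resp.\ primed) together with Bob's (resp.\ Charlie's) projectors. Because those projectors are mutually orthogonal, $\sqrt B$ and $\sqrt C$ are obtained by replacing each $F$ by $\sqrt F$ blockwise, and then $\norm[\big]{\sqrt B\sqrt C}=\max_{x,y}\norm[\big]{\bigotimes_{\ell\in\cT}\sqrt{F^{\theta_\ell}_{x_\ell}}\sqrt{F^{\theta'_\ell}_{y_\ell}}}\le c(\mathsf{G})^{t/2}$, which delivers the required constant directly in terms of the POVM overlap. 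Your computation of $\bar P\bar Q\bar P\le c^{\,t}\bar P$ and all the subsequent bookkeeping are correct and match the paper whenever Alice's measurements are already projective (as in $\mathsf{G}_{\BB}$); to prove Theorem~\ref{th:gen} as stated you need to replace the dilation step by the square-root argument above.
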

Recall that in case of $\mathsf{G}_{\BB}$, we have $\abs{\cQ} = 1$, $\abs{\Theta} = 2$, and $c(\mathsf{G}_{\BB}) = \frac{1}{2}$, leading to the bound stated in Theorem~\ref{th:perfect}.

\begin{proof}
We closely follow the proof of the upper bound in Theorem~\ref{th:perfect}. 
For any pure strategy ${\cal S}_n = \{ \rho_{A_1 \ldots A_n BC}, P_x^{\theta}, Q_x^{\theta}\}$, we bound
\begin{align}
\sum_\theta \frac{1}{\abs{\Theta}^n} \tr{A^\theta \rho_{A_1 \ldots A_n BC}}
\leq \frac{1}{\abs{\Theta}^{n}}  
    \norm[\Big]{\sum_{\theta} A^\theta} \leq 
\frac{1}{\abs{\Theta}^{n}} \sum_q \sum_k \max_{\theta}
    \norm[\bigg]{\sqrt{A_q^{\phantom{\pi^k()\hspace{-0.5cm}}\theta}} \sqrt{A_q^{\pi^k(\theta)}}} \label{blah5},
\end{align}
where we introduce $A_q^\theta := \sum_{x} \big( \bigotimes_{\ell=1}^n
F_{x_\ell}^{\theta_\ell} \big) \otimes P_{\pi_B^q(x)}^\theta \otimes Q_{\pi_C^q(x)}^\theta$.
We now fix $\theta$ and $\theta'$ and bound the norms $\norm[\Big]{\sqrt{A_q^\theta} \sqrt{A_q^{\theta'}}}$.
Let $\cT$ be the set of indices where $\theta$ and $\theta'$ differ. We choose
\begin{align}
  B = \sum_{x} \bigotimes_{\ell \in \cT}  F_{x_{\ell}}^{\theta_{\ell}}
  \otimes \id_{\cT^c} \otimes P_{\pi_B^q(x)}^{\theta} \otimes \id_{C} \ \nbox{and}{2} \
  C = \sum_{x} \bigotimes_{\ell \in \cT}  F_{x_{\ell}}^{\theta_{\ell}'} \otimes
  \id_{\cT^c} \otimes \id_{B} \otimes Q_{\pi_C^q(x)}^{\theta'},  \nonumber
\end{align}
which satisfy $B \geq A_{q}^{\theta}$ and $C \geq A_{q}^{\theta'}$. Hence, from Lemma~\ref{lm:norm-monotonicity} we obtain $\norm[\Big]{\sqrt{A_q^\theta} \sqrt{A_q^{\theta'}}} \leq \norm[\big]{\sqrt{B}\sqrt{C}}$. We evaluate
\begin{align}
  \norm[\big]{\sqrt{B} \sqrt{C}} = \norm[\Bigg]{\sum_{x, y}
    \bigotimes_{\ell \in \cT} \sqrt{F_{x_{\ell}}^{\theta_{\ell}}} \sqrt{F_{y_{\ell}}^{\theta_{\ell}'}}
  \otimes \id_{\cT^c} \otimes P_{\pi_B^q(x)}^{\theta} \otimes Q_{\pi_C^q(y)}^{\theta'} } \nonumber
    = \max_{x, y} \norm[\bigg]{ \bigotimes_{\ell \in \cT} \sqrt{F_{x_{\ell}}^{\theta_{\ell}}} \sqrt{F_{y_{\ell}}^{\theta_{\ell}'}}} \leq c(\mathsf{G})^t \nonumber.
\end{align}

It remains to find suitable permutations $\pi^k$ and substitute the above bound into~\eqref{blah5}. Again, we choose permutations with the property that the Hamming distance between $\theta$ and $\pi^{k}(\theta)$ is the same for all $\theta \in \Theta^{n}$. It is easy to verify that there are ${n \choose t} \big( \abs{\Theta} - 1)^t$ permutations for which the ($\theta$-independent) Hamming distance between $\theta$ and $\pi^{k}(\theta)$ is $t$. Hence,
\begin{align}
\sum_\theta \frac{1}{\abs{\Theta}^n} \tr{\Pi^\theta \rho_{A_1 \ldots A_n BC}}
\leq \frac{\abs{\cQ}}{\abs{\Theta}^{n}} \sum_{t = 0}^{n} {n \choose t}
    \big( \abs{\Theta} - 1 \big)^t  (\sqrt{c(\mathsf{G})})^t = \abs{\cQ} \bigg( \frac{1}{\abs{\Theta}} + \frac{\abs{\Theta} - 1}{\abs{\Theta}}\, \sqrt{c(\mathsf{G})}
    \bigg)^n \nonumber \, ,
\end{align}
which concludes the proof. 
\end{proof}

One particularly interesting example of the above theorem considers binary measurements, i.e.\ $\mathcal{X} = \{0,1\}$, where
Alice will accept Bob's and Charlie's answers if and only if they get less than a certain fraction of bits wrong.
More precisely, she accepts if $d(x, y) \leq \gamma\, n$ and $d(x,z) \leq \gamma'\,n$,
where $d(\cdot,\cdot)$ denotes the Hamming distance and $y$, $z$ are Bob's and Charlie's guesses, respectively. In this case, we introduce the set $\cQ_{\gamma,\gamma'}^n$ that contains all pairs
of permutations $(\pi_B^q, \pi_C^q)$ on $\{0,1\}^n$ of the form
$\pi_B^q(x) = x \oplus k$, $\pi_C^q(x) = x \oplus k'$, where $q = \{k, k'\}$, and $k, k' \in \{0,1\}^n$ have Hamming weight
at most $\gamma n$ and $\gamma' n$, respectively. 
For $\gamma, \gamma' \leq 1/2$, one can upper bound
$\abs{\cQ_{\gamma,\gamma'}^n} \leq 2^{n h(\gamma) + n h(\gamma')}$, where $h(\cdot)$ denotes the binary entropy. We thus find
\begin{align}
  \pwin(\mathsf{G}^{\times n}; \cQ_{\gamma,\gamma'}^n) \leq \bigg( 2^{h(\gamma) + h(\gamma')}\, \frac{1 + (\abs{\Theta} - 1) \sqrt{c(\mathsf{G})}}{\abs{\Theta}} \bigg)^n \label{eq:error}.
\end{align}
Similarly, if we additionally require that Charlie guesses the same string as Bob, we analogously 
define the corresponding set $\cQ_{\gamma}^n$, with reduced cardinality, and
\begin{align}
  \pwin(\mathsf{G}^{\times n}; \cQ_\gamma^n) \leq \bigg( 2^{h(\gamma)}\, \frac{1 + (\abs{\Theta} - 1) \sqrt{c(\mathsf{G})}}{\abs{\Theta}}
  \bigg)^n \nonumber.
\end{align}
%

\section{Application: One-Sided Device-Independent QKD}\label{sc:qkd}

\newcommand{\drel}{d_{\mathrm{rel}}}
\newcommand{\syn}{\mathrm{syn}}

In the following, we assume some familiarity with quantum key distribution (QKD). 
%
For simplicity, we consider an entanglement-based~\cite{ekert91} variant of the BB84 QKD scheme~\cite{bb84}, where Bob waits with performing the measurement until Alice tells him the right bases. This protocol is impractical because it requires Bob to store qubits. However, it is well known that security of this impractical version implies security of the original, more practical BB84 QKD scheme~\cite{bennett92}. It is straightforward to verify that this implication also holds in the one-sided device-independent setting we consider here. 

The entanglement-based QKD scheme, {\bf E-QKD}, is described in Figure~\ref{fig:EPRQKD}. It is (implicitly) parameterized by positive integers $0 < t,s,\ell < n$ and a real number $0 \leq \gamma < \frac12$. Here, $n$ is the number
of qubits exchanged between Alice and Bob, $t$ is the size of 
the sample used for parameter estimation, $s$ is the leakage (in bits) due to error correction, $\ell$ is the length (in bits) of the final key, and $\gamma$ is the tolerated error in Bob's measurement
results. 
Furthermore, the scheme makes use of a universal$_2$ family $\cal F$ of hash functions $F: \{0,1\}^{n-t} \to \{0,1\}^{\ell}$. 


\begin{figure}
\begin{framed}
\begin{description}\setlength{\parskip}{0.5ex}
  \item[\sl State Preparation: ] Alice prepares $n$ EPR pairs 
    $\frac{1}{\sqrt{2}} \big( \ket{0} \otimes \ket{0} + \ket{1} \otimes \ket{1} \big)$.  Then, of each pair, she keeps one qubit and sends the other to Bob. 
  \item[\sl Confirmation: ] Bob confirms receipt of the $n$ qubits. (After this point, there cannot be any communication
  between Bob's device and Eve.)
  \item[\sl Measurement: ] Alice chooses random $\Theta \in \{0,1\}^n$ and sends it to Bob, and Alice and Bob measure the EPR pairs in basis 
    $\Theta$ to obtain $X$ and $Y$,   respectively. \\
    (Remember: Bob's device may produce $Y$ in an arbitrary way, using a POVM chosen depending on $\Theta$ acting
    on a state provided by Eve.)
  \item[\sl Parameter Estimation: ] Alice chooses a random subset $T \subset \{1,\ldots,n\}$ of size $t$, and sends $T$ and $X_T$ to Bob.
    If the relative Hamming distance,
    $\drel(X_T,Y_T)$, exceeds $\gamma$ then they abort the protocol and set $K =\ \perp$.
  \item[\sl Error Correction: ] Alice sends a syndrome $S(X_{\bar{T}})$ of length $s$
    and a random hash function $F: \{0,1\}^{n-t} \to \{0,1\}^{\ell}$ from $\cal F$ to Bob.
  \item[\sl Privacy Amplification: ] Alice computes $K = F(X_{T^c})$ and Bob $\hat{K} = F(\hat X_{T^c})$, 
    where $\hat X_{T^c}$ is the corrected version of $Y_{T^c}$. 
\end{description}\vspace{-3ex}
\end{framed}\vspace{-2ex}
\caption{An entanglement-based QKD scheme~{\bf E-QKD}. }\label{fig:EPRQKD}
\end{figure}

A QKD protocol is called {\em perfectly secure} if it either aborts and outputs an empty key, 
$K =\, \perp$, or it produces a key that is uniformly random and independent of Eve's (quantum and classical) information $E^+$ gathered during the execution of the protocol. Formally, this means that the final state must be of the form 
$\rho_{KE^+} = \Pr_{\rho}[K\neq\,\perp] \cdot \mu_K \otimes \rho_{E^+|K\neq\perp} + \Pr_{\rho}[K =\,\perp] \cdot |\!\!\perp\rangle\!\langle\perp\!\!|_K \otimes \rho_{E^+|K=\perp}$, 
where $\mu_K$ is a $2^\ell$-dimensional completely mixed state, and $|\!\!\perp\rangle\!\langle\perp\!\!|_K$ is 
orthogonal to~$\mu_K$.

Relaxing this condition, a protocol is called {\em $\delta$-secure} if $\rho_{KE^+}$ is $\delta$-close to the above form
in trace distance, meaning that $\rho_{KE^+}$ satisfies 
\begin{align}
  \Pr_{\rho}[K \neq \, \perp] \cdot \Delta( \rho_{KE^+|K\neq\perp}, \mu_K \otimes \rho_{E^+|K\neq\perp}) \leq \delta \,.
\end{align}

It is well known and has been proven in various ways that {\bf E-QKD} is $\delta$-secure (with small~$\delta$) with a suitable choice of parameters, assuming that all quantum operations are correctly performed by Alice and Bob. We now show that the protocol remains secure even if Bob's measurement device behaves arbitrarily and possibly maliciously. The only assumption is that Bob's device does not
communicate with Eve after it received Alice's quantum signals. This restriction is clearly necessary as there would otherwise not be any asymmetry between Bob and Eve's information about Alice's key. Note that the scheme is well known to satisfy {\em correctness} and {\em robustness}; hence, we do not argue these here. 


\begin{theorem}
Consider an execution of {\bf E-QKD}, with an arbitrary measurement device for Bob. Then, 
for any $\eps > 0$, protocol 
{\bf E-QKD} is $\delta$-secure with
\begin{align*}
\delta = 5 e^{-2 \eps^2 t} + \ 2^{-\frac12\big(\log(1/\beta_\circ)n-h(\gamma+\eps)n-\ell-t-s+2\big)}
\quad
\textrm{where} \quad \beta_\circ = \frac12+\frac{1}{2\sqrt{2}}. 
\end{align*}
\end{theorem}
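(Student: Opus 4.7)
The proof runs along the textbook recipe for reducing QKD security to a bound on Eve's min-entropy, with the monogamy-game Theorem~\ref{th:gen} supplying that bound. By the quantum Leftover Hash Lemma applied to the universal$_2$ family $\cF$, privacy amplification produces a key that is $\tfrac{1}{2}\cdot 2^{-\frac{1}{2}(H_{\min}(X_{T^c}\mid E\Theta T X_T S)_{\rho\wedge\Omega} - \ell)}$-close to uniform conditioned on the non-abort event $\Omega$, so it suffices to lower-bound this (subnormalized) conditional min-entropy. Two chain-rule steps peel off the classical side information: the syndrome $S$ carries only $s$ bits, giving $H_{\min}(X_{T^c}\mid E\Theta T X_T S) \geq H_{\min}(X_{T^c}\mid E\Theta T X_T) - s$; and the standard min-entropy chain rule $H_{\min}(A\mid BC) \geq H_{\min}(AC\mid B) - \log\abs{C}$, applied with $A=X_{T^c}$, $B=E\Theta T$, $C=X_T$, yields $H_{\min}(X_{T^c}\mid E\Theta T X_T) \geq H_{\min}(X\mid E\Theta T) - t$. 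Because $T$ is independent of $(X,E,\Theta)$, this in turn equals $H_{\min}(X\mid E\Theta) - t$, so Eve's effective task is to guess the \emph{entire} string $X$ using only her quantum register and the announced basis $\Theta$ --- exactly the setting of a monogamy game.

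I then split the non-abort event as $\Omega = (\Omega\cap\Omega') \cup (\Omega\setminus\Omega')$, where $\Omega'$ is the globally typical event $\drel(X, Y) \leq \gamma+\eps$ on the full block. The tail $\Omega\setminus\Omega'$ is where the sample error rate on $T$ underestimates the true error rate by more than $\eps$; a classical random-sampling-without-replacement bound (Serfling/Hoeffding, applied conditionally on the classical registers $\Theta$, $X$, $Y$) gives $\Pr[\Omega\setminus\Omega'] = O(e^{-2\eps^2 t})$, which after a small union bound becomes the $5\,e^{-2\eps^2 t}$ contribution to $\delta$. On the good event $\Omega\cap\Omega'$, Bob's raw string $Y$ is $(\gamma+\eps)n$-Hamming-close to $X$, so any correct guess of $X$ by Eve together with Bob's output $Y$ constitutes a winning assignment for $\mathsf{G}_{\BB}^{\times n}$ with tolerance set $\cQ_{\gamma+\eps, 0}^n$. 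Inequality~\eqref{eq:error}, instantiated with $\gamma' = 0$, $\abs{\Theta} = 2$, and $c(\mathsf{G}_{\BB}) = \tfrac{1}{2}$, then gives $\pg(X\mid E\Theta)_\rho \cdot \Pr[\Omega\cap\Omega'] \leq (2^{h(\gamma+\eps)}\beta_\circ)^n$.

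Assembling these inequalities yields a lower bound $H_{\min}(X_{T^c}\mid E\Theta T X_T S)_{\rho\wedge\Omega} \geq n\log(1/\beta_\circ) - n\,h(\gamma+\eps) - t - s + O(1)$, which upon substitution into the LHL estimate reproduces the exponential $2^{-\frac{1}{2}(\log(1/\beta_\circ)n - h(\gamma+\eps)n - \ell - t - s + 2)}$ term of $\delta$. The hardest aspect is not any individual step --- each is either a standard chain rule, the sampling tail, or Theorem~\ref{th:gen} --- but rather the simultaneous tracking of \emph{subnormalized} states and classical error events: the factor $2^t$ from dropping $X_T$, the weight $\Pr[\Omega]$, and the sampling failure probability combine additively in trace distance while guessing probabilities combine multiplicatively, so the argument must stay at the level of guessing probabilities until the very last step before invoking LHL to produce the additive $\delta$ claimed in the statement.
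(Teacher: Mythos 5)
Your proposal is correct and follows essentially the same route as the paper: the \emph{gedankenexperiment} in which Eve measures to guess $X$, the reduction to $\pwin(\mathsf{G}_{\BB}^{\times n};\cQ_{\gamma+\eps,0}^n)\leq(2^{h(\gamma+\eps)}\beta_\circ)^n$ via Eq.~\eqref{eq:error}, the Hoeffding sampling step, the chain rules removing $t+s$ bits, and privacy amplification with universal$_2$ hashing. The only difference is bookkeeping: the paper makes the event argument explicit by replacing $\rho$ with an auxiliary state $\tilde{\rho}$ in which non-abort implies $\drel(X,Y)\leq\gamma+\eps$ with certainty and then transfers the security statement back via Lemma~\ref{lemma:SecDefs}, which is where the constant $5$ in front of $e^{-2\eps^2 t}$ (which your ``small union bound'' leaves unverified) is actually pinned down.
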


%
Note that with an optimal error correcting code, the size of the syndrome for large $n$ approaches the
Shannon limit $s = n h(\gamma)$. The security error $\delta$ can then be made negligible in $n$ with suitable choices of parameters if $\log(1/\beta_\circ) > 2 h(\gamma)$, which roughly requires that $\gamma \leq 0.015$. 
Hence, the scheme can tolerate a noise level up to $1.5\%$ asymptotically.\footnote{This can be improved slightly by instead considering a six-state protocol~\cite{bruss98}, where the measurement is randomly chosen among three mutually unbiased bases on the qubit.}

The formal proof is given below. The idea is rather simple: We consider a \emph{gedankenexperiment} where Eve {\em measures} her system, using an arbitrary POVM, with the goal to guess~$X$. The execution of {\bf E-QKD} then pretty much coincides with $\mathsf{G}_{\BB}^{\times n}$, and we can conclude from our results that if Bob's measurement outcome $Y$ is close to $X$, then Eve must have a hard time in guessing $X$. Since this holds for any measurement she may perform, this means her min-entropy on $X$ is large 
and hence the extracted key $K$ is secure.

\begin{proof}
Let $\rho_{\Theta T ABE} = \rho_\Theta \otimes \rho_T \otimes \ketbra{\psi_{ABE}}$ be the state before Alice and Bob perform the measurements on $A$ and $B$, respectively, where system $E$ is held by the adversary Eve. Here, the random variable $\Theta$ contains the choice of basis for the measurement, whereas the random variable $T$ contains the choice
of subset on which the strings are compared (see the protocol description in Fig.~\ref{fig:EPRQKD}.)
Moreover, let $\rho_{\Theta T XYE}$ be the state after Alice and Bob measured, where\,---\,for every possible value $\theta$\,---\,Alice's measurement is given by the projectors 
$\{\ketbra{x^\theta}\}_x$, and Bob's measurement by an arbitrary but fixed POVM $\{ P^\theta_x \}_x$. 

As a \emph{gedankenexperiment}, we consider the scenario where Eve wants to guess the value of Alice's
raw key, $X$. Eve wants to do this during the parameter estimation step of the protocol, exactly 
\emph{after} Alice broadcast $T$ but \emph{before} she broadcasts $X_T$.\footnote{Note that the effect
of Eve learning $X_T$ is taken into account later, in Eq.~\eqref{eq:penalty}.}
For this purpose, we consider an arbitrary measurement strategy of Eve that aims to guess 
$X$. Such a strategy
is given by\,---\,for every basis choice, $\theta$, and every choice of sample, $\tau$\,---\,a POVM $\{ Q^{\theta,\tau}_x \}_x$. The values of $\theta$ and $\tau$ have been broadcast over a public channel, and are hence known to Eve at this point of the protocol. She will thus choose a POVM depending on these values to measure $E$ and use the measurement outcome as her guess.

For our \emph{gedankenexperiment}, we will use the state, 
$\rho_{\Theta T XYZ}$, which is the (purely classical) state that results after Eve applied her measurement on $E$. 
Let $\varepsilon >0 $ be an arbitrary constant. By our results from Section~\ref{sec:GuessingGame}, it follows that for any choices of $\{ P^\theta_x \}_x$ and $\{ Q^{\theta,\tau}_x \}_x$, we have 
$$\Pr_{\rho}[ \drel(X,Y) \!\leq\! \gamma\!+\!\eps \,\wedge\, Z \!=\! X ] \leq \pwin(\mathsf{G}_{\BB}^{\times n}; \cQ_{\gamma+\varepsilon,0}^n) \leq \beta^n 
$$
with $\beta = 2^{h(\gamma+\eps)} \cdot \beta_\circ$, 
where $\drel$ denotes the relative Hamming distance. This uses the fact that Alice's measurement outcome is independent of $T$, and $T$ can in fact be seen as part of Eve's system for the purpose of the monogamy game.


We now construct a state $\tilde{\rho}_{\Theta T XYE}$ as follows.
$$
\tilde{\rho}_{\Theta T XYE} =  \Pr_{\rho}[ \Omega] \cdot \rho_{\Theta T XYE|\Omega} +  \big( 1 - \Pr_{\rho}[\Omega] \big) \cdot \sigma_{\Theta T XYE} ,
$$
where $\Omega$ denotes the event $\Omega = \{ \drel(X,Y) \leq \drel(X_T,Y_T) + \varepsilon \}$, and we take $\sigma_{T \Theta XYE}$ to be an arbitrary state with classical 
$\Theta$, $T$, $X$ and $Y$ for which $\drel(X,Y) = 1$, and hence $\drel(X_T,Y_T) = 1$. Informally, the event $\Omega$ indicates that the relative Hamming distance of the sample strings $X_T$ and $Y_T$ 
determined by $T$ was representative of the relative Hamming distance between the 
whole strings, $X$ and $Y$, and the state $\tilde{\rho}_{\Theta T XYE}$ is so that this is satisfied with certainty. 
By construction of $ \tilde{\rho}_{\Theta T XYE}$, we have $\Delta(\rho_{\Theta T XYE}, \tilde{\rho}_{\Theta T XYE}) 
\leq 1 - \Pr_{\rho}[\Omega]$, and by Hoeffding's inequality, 
\begin{align*}
  1 - \Pr_{\rho}[\Omega] = \Pr_{\rho}[ \drel(X,Y) > \drel(X_T,Y_T) + \varepsilon ] \leq e^{-2 \eps^2 t} . 
\end{align*}
Moreover, note that the event $\drel(X_T,Y_T) \leq \gamma$ implies $\drel(X,Y) \leq \gamma+\varepsilon$
under $\tilde{\rho}_{\Theta T XYE}$.
Thus, for every choice of strategy $\{ Q^{\theta,\tau}_x \}_x$ by the eavesdropper, the resulting state $\tilde{\rho}_{\Theta T XYZ}$, obtained by applying $\{ Q^{\theta,\tau}_x \}_x$ to $E$, satisfies
\begin{align}
\Pr_{\tilde{\rho}}[\drel(X_T,Y_T) \!\leq\! \gamma \wedge Z \! = \! X] &\leq \Pr_{\tilde{\rho}}[\drel(X,Y) \!\leq\! \gamma\!+\!\eps \wedge Z \! = \! X] \label{eq:b1}\\
&\leq \Pr_\rho[\drel(X,Y) \!\leq\! \gamma\!+\!\eps \wedge Z \!=\! X] \leq \beta^n . \nonumber
\end{align}
The second inequality follows from the definition of $\tilde{\rho}$, in particular the fact that $\Pr_{\sigma} [ \drel(X,Y) \leq \gamma + \eps ] = 0$.

Next, we introduce the event $\Gamma = \{ \drel(X_T, Y_T) \leq \gamma \}$, which corresponds to the event that Bob does not abort the protocol. 
Expanding the left hand side of~\eqref{eq:b1} to 
$\Pr_{\tilde{\rho}}[\Gamma] \cdot \Pr_{\tilde{\rho}}[Z \! = \! X | \Gamma]$
and observing that $\Pr_{\tilde{\rho}}[\Gamma]$ 
does not depend on the strategy $\{ Q^{\theta,\tau}_x \}_x$, we can conclude that 
\begin{align*}
  \forall \, \{ Q^{\theta,\tau}_x \}_x : \  \Pr_{\tilde{\rho}}[Z \! = \! X | \Gamma ] \leq \beta^{(1-\alpha)n} 
\end{align*}
where $\alpha \geq 0$ is determined by $\Pr_{\tilde{\rho}}[\Gamma ] = \beta^{\alpha n}$. 
Therefore, by definition of the min-entropy, $\Hmin(X | \Theta T E, \Gamma)_{\tilde{\rho}} \geq n (1\!-\!\alpha) \log(1/\beta)$.  
(This notation means that the min-entropy of $X$ given $\Theta$, $T$ and $E$ is evaluated for the state $\tilde{\rho}_{\Theta T X Y E | \Gamma}$, conditioned on not aborting.) 
By the chain rule, it now follows that 
\begin{align}\label{eq:penalty}
  \Hmin(X | \Theta T X_T S E, \Gamma)_{\tilde{\rho}} &\geq \Hmin(X X_T S | \Theta T E, \Gamma)_{\tilde{\rho}} - t - s \\
  &\geq n (1-\alpha) \log(1/\beta) - t - s \,. \nonumber
\end{align}
Here, the min-entropy is evaluated for the state $\tilde{\rho}_{X \Theta T X_T S E}$ that is constructed 
from $\tilde{\rho}_{X \Theta T E}$ by calculating the error syndrome and copying $X_T$ from $X$ as
done in the prescription of the protocol. In particular,  
$\Delta(\tilde{\rho}_{X \Theta T X_T S E}, \rho_{X \Theta T X_T S E}) \leq e^{-2\eps^2t}$.
Finally, privacy amplification with universal$_2$ hashing applied to the state $\tilde{\rho}_{X \Theta T X_T S E}$ ensures that the key $K$ satisfies~\cite[Corollary 5.5.2]{renner05}
\begin{align*}
\Delta(\tilde{\rho}_{K F \Theta T X_T S E | \Gamma},\, \mu_K \otimes \tilde{\rho}_{F \Theta T X_T E | \Gamma})
&\leq \frac{1}{2} \sqrt{ 
\beta^{(1-\alpha)n}\, 2^{\ell + t + s}} \, .
\end{align*}
And, in particular, recalling that $\Pr_{\tilde{\rho}}[\Gamma ] = \beta^{\alpha n}$, we have
\begin{align*}
  \Pr_{\tilde
{\rho}}[\Gamma ] \cdot \Delta(\tilde{\rho}_{K F \Theta T X_T S E | \Gamma},\, \mu_K \otimes \tilde{\rho}_{F \Theta T X_T E | \Gamma}) \leq \frac{1}{2} \sqrt{\beta^{n}\, 2^{\ell + t + s}} \, .
\end{align*}
Using $\beta = 2^{h(\gamma+\eps)} \beta_\circ$ and applying Lemma~\ref{lemma:SecDefs} in Appendix~\ref{app:secdefs} concludes the proof. 
\end{proof}

\section{Application II: A One-Round Position-Verification Scheme}\label{sc:pv}

The scheme we consider is the parallel repetition of the simple single-qubit scheme that was analyzed in the setting of no pre-shared entanglement in~\cite{buhrman11}. The analysis shows that the soundness error of the one-round single-qubit scheme is bounded by roughly $89\%$, and it is suggested to repeat the scheme sequentially in order to reduce this soundness error. We now show that also the {\em parallel repetition} has an exponentially small soundness error.%
\footnote{We stress that this was to be expected and does not come as a surprise. However, until now it was unclear how to prove it.}
Finally, we use a simple observation from~\cite{beigi11} to argue that the scheme is also secure against adversaries with a linearly bounded amount of entanglement.

The scheme, parameterized by a positive integer $n$, consists of the following steps.
\begin{enumerate}
\item $V_0$ and $V_1$ agree on random $x,\theta \in \{0,1\}^n$. $V_0$ prepares a quantum system $Q$ of $n$ qubits in the state $H^\theta \ket{x} = H^{\theta_1} \ket{x_1} \otimes \cdots \otimes H^{\theta_n} \ket{x_n} \in \sH_Q = (\bC^2)^{\otimes n}$ and sends it to $P$. $V_1$ sends $\theta$ to $P$, so that both arrive at $P$'s claimed position $\pos$ at the same time.
\item As soon as $Q$ and $\theta$ arrive, $P$ measures the $i$-th qubit in basis $\{H^{\theta_i} \ket{0}, H^{\theta_i} \ket{1}\}$ for $i = 1,\ldots,n$. Let $x' \in \{0,1\}^n$ collect the observed bits. $P$ sends $x'$ to $V_0$ and $V_1$.
\item If $V_0$ and $V_1$ receive $x'$ at the respective time consistent with $\pos$, and if $x' = x$, then $V_0$ and $V_1$ accept; otherwise, they reject.
\end{enumerate}
It is straightforward to verify that this protocol is correct, meaning that the verifiers accept honest $P$ at position $\pos$ with certainty (assuming a perfect setting with no noise, etc.).

\begin{proposition}\label{prop:posver}
The above position verification scheme is $(\frac{1}{2}\!+\!\frac{1}{2\sqrt{2}})^n$-sound against adversaries $(E_0,E_1)$ that hold no entangled state at the time they receive $Q$ and $\theta$, respectively.
\end{proposition}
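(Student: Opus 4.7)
The plan is to reduce the soundness analysis of this one-round position-verification scheme to the parallel monogamy game $\mathsf{G}_{\BB}^{\times n}$, and then invoke Theorem~\ref{th:perfect}. First I would pass to the entanglement-based equivalent of the protocol, in which $V_0$ prepares $n$ EPR pairs, sends one half of each pair towards the claimed position $\pos$, and defers measuring her own halves in basis $\theta$ until after the entire adversarial interaction is over. Since the state on the transmitted halves, together with the outcome of $V_0$'s deferred measurement, has exactly the same joint distribution as the pair $(H^\theta\ket{x}, x)$ for uniform $x$, this is statistically indistinguishable from the original scheme and does not decrease the adversaries' success probability.

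Next, I would describe the most general attack by $(E_0, E_1)$ consistent with the timing. By the no-pre-shared-entanglement hypothesis, immediately before the interaction the state on $A\, Q\, E_0 E_1$ is $\ket{\Phi^+}_{AQ}^{\otimes n}\otimes\ket{\alpha_0}_{E_0}\otimes\ket{\alpha_1}_{E_1}$, where $A$ is held by $V_0$. Without loss of generality (by Stinespring), $E_0$ applies a single isometry to its part ($Q$ together with its local ancilla $E_0$), producing two registers $B$ and $C$; $B$ is forwarded to the spatial location of $E_1$ while $C$ is kept. Concurrently $E_1$ forwards $\theta$ back to $E_0$. By the timing of the protocol, only this one round of communication is possible before both parties must answer. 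After the exchange, $E_0$ holds register $C$ and knows $\theta$, and $E_1$ holds register $B$ and knows $\theta$. Each party then applies a $\theta$-dependent POVM, $\{Q_x^\theta\}_x$ on $C$ and $\{P_x^\theta\}_x$ on $B$, to produce its guess.

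The joint state on $ABC$ after the exchange is some tripartite $\rho_{ABC}$ with $A$ of dimension $2^n$, and the verifiers accept iff both guesses coincide with the string $x$ that arises from $V_0$'s BB84-basis measurement on $A$ in basis $\theta$---which is precisely the winning condition of $\mathsf{G}_{\BB}^{\times n}$. Thus every attack in the model defines a strategy $\mathcal{S} = \{\rho_{ABC},P_x^\theta,Q_x^\theta\}$ for $\mathsf{G}_{\BB}^{\times n}$ whose winning probability equals the scheme's acceptance probability. Applying Theorem~\ref{th:perfect} yields the bound $\bigl(\tfrac{1}{2}+\tfrac{1}{2\sqrt{2}}\bigr)^n$ on the soundness error.

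The main obstacle is formalizing the timing and locality argument so that an arbitrary attack really does collapse into the clean ``one isometry by $E_0$, one classical broadcast by $E_1$, then independent local POVMs'' structure assumed above; the key role of the no-entanglement hypothesis is that it forces the initial adversarial state to be a product state, which is what lets us identify $A$ as the entire $V_0$-side register of a monogamy-game strategy. A secondary routine point is to justify deferring $V_0$'s measurement (a standard EPR-type argument), which is what allows us to speak of $\rho_{ABC}$ rather than a state already conditioned on a fixed $x$.
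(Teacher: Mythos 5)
Your proposal is correct and follows essentially the same route as the paper's proof sketch: defer $V_0$'s measurement via the standard EPR/purification argument, note that the no-entanglement assumption and the timing force the attack into a single $\theta$-independent isometry splitting $Q$ into two registers plus forwarding of $\theta$, identify the resulting tripartite state and $\theta$-dependent POVMs as a strategy for $\mathsf{G}_{\BB}^{\times n}$, and invoke Theorem~\ref{th:perfect}. The only differences are cosmetic (register labels swapped, and the paper additionally remarks that the induced game strategy is restricted to have a maximally mixed reduced state on $A$, a detail irrelevant for the upper bound).
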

We stress that a restriction on the entanglement is necessary, as with unbounded entanglement the general impossibility result from~\cite{buhrman11} applies. In fact, for the specific scheme considered here, already $n$ shared EPR-pairs are sufficient to break it, as shown in~\cite{kent10}.
Below, we will extend the security of the scheme to a setting where the adversaries share at most $\alpha n$ entangled qubits, for any constant $\alpha \lesssim 0.22$.

We also point out that our adversary model (with linearly bounded entanglement) is stronger than the one considered by Beigi and K\"onig~\cite{beigi11} for their schemes: their model not only prohibits quantum communication between the adversaries {\em before} they obtain the initial messages from the verifiers (in order to prevent the exchange of entangled states), but also {\em afterwards}. Here, we allow full quantum communication between the adversaries after they have received the initial respective messages $Q$ and $\theta$.

\begin{proof}[Proof (sketch)]
As the colluding dishonest parties $E_0$ and $E_1$ share no entanglement, the most general attack is of the following form, where we may assume $E_i$ to be located between $V_i$ and the position $\pos$, for $i \in \{0,1\}$. Upon receiving the $n$-qubit system $Q$ (in state $H^\theta \ket{x}$) from $V_0$, the adversary $E_0$ applies an isometry $\sH_Q \to \sH_B \otimes \sH_C$ to $Q$ in order to obtain a bipartite system $B$ and $C$, and forwards $C$ to~$E_1$. Adversary $E_1$, upon receiving $\theta$ from $V_1$, simply forwards $\theta$ to $E_0$.\footnote{This is where the restriction of no entanglement comes into play. If the adversaries shared entanglement their most general strategy would be to perform some joint operation on the respective part of the entangled state and the data they have just received. The impossibility result states that in a scenario with an unlimited amount of entanglement no position verification scheme can be secure.} Then, when $E_0$ receives $\theta$ from $E_1$, he measures $B$ (using an arbitrary measurement that may depend on $\theta$) and sends the measurement outcome $x'_0 \in \{0,1\}^n$ to $V_0$, and, similarly, when $E_1$ receives system $C$ from $E_0$, he measures $C$ and sends the measurement outcome $x'_1 \in \{0,1\}^n$ to $V_1$. The probability $\varepsilon$ that $V_0$ and $V_1$ accept is then given by the probability that $x'_0 = x = x'_1$.

From a standard purification argument it follows that the probability $\varepsilon$ does not change if in the first step of the protocol, instead of sending $Q$ in state $H^\theta \ket{x}$, $V_0$ prepares $n$ EPR pairs, sends one half of each pair towards $P$ and only at some later point in time measures the remaining $n$ qubits in the basis $\{H^\theta\ket{y}\}_{y \in \{0,1\}^n}$ to obtain $x \in \{0,1\}^n$.

Let us now consider the state $\ket{\psi_{ABC}} \in \sH_A \otimes \sH_B \otimes \sH_C$, consisting of system $A$ with the $n$ qubits that $V_0$ kept, and the systems $B$ and $C$ obtained by applying the isometry to the qubits $E_0$ received from $V_0$. Since the isometry is independent of $\theta$\,---\,$E_0$ needs to decide on it before he finds out what $\theta$ is\,---\,so is the state $\ket{\psi_{ABC}}$. It is clear that in order to pass the position verification test the adversaries must win a restricted version of the game $\mathsf{G}_{\BB}^{\times n}$.\footnote{The extra restriction comes from the fact that they have no access to the qubits kept by $V_{0}$ and so the reduced state on those must be fully mixed. It turns out that this restriction does not affect the optimal winning probability.} Therefore, the probability $\varepsilon$ that $x'_0 = x = x'_1$ is bounded by $\pwin(\mathsf{G}_{\BB}^{\times n})$. Our Theorem~\ref{th:perfect} thus concludes the proof.
\end{proof}

The security of the position verification scheme can be immediately extended to adversaries that hold a linear amount of shared entanglement.

\begin{corollary}\label{cor:posver}
The above position verification scheme is \smash{$d \cdot (\frac{1}{2}\!+\!\frac{1}{2\sqrt{2}})^n$}-sound against adversaries $(E_0,E_1)$ that share an arbitrary (possibly entangled) state $\eta_{E_{0} E_{1}}$, such that $\dim \eta_{E_{0} E_{1}} = d$, at the time they receive $Q$ and $\theta$, respectively.
\end{corollary}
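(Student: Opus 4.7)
The plan is to reduce to Proposition~\ref{prop:posver} by showing that having a shared state of dimension $d$ can at most boost the winning probability by a factor of $d$. First, I would fix an arbitrary attack with shared state $\eta_{E_0E_1}$ and fix all its other ingredients (the isometry $E_0$ applies to $Q\otimes F_0$ producing $B$ and the system forwarded to $E_1$, together with the final $\theta$-dependent measurements of both adversaries). Pushing the EPR pairs prepared by $V_0$, the adversarial operations, the accept projector, and the average over $\theta$ through the winning event allows the overall success probability $\epsilon$ to be written in the form
\[
\epsilon \,=\, \Tr\bigl(M\, \eta_{E_0 E_1}\bigr)
\]
for some $M\in\opos{\sH_{E_0}\otimes\sH_{E_1}}$ that depends on every ingredient of the attack \emph{except} the shared state. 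In particular, $\epsilon$ is linear in $\eta_{E_0E_1}$.

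The main step is then the trivial operator inequality
\[
\eta_{E_0 E_1} \,\leq\, \id_{\sH_{E_0}\otimes \sH_{E_1}} \,=\, d\cdot \sigma,\qquad \sigma \,:=\, \frac{\id_{E_0}}{\dim \sH_{E_0}}\otimes \frac{\id_{E_1}}{\dim \sH_{E_1}},
\]
where $d=\dim\sH_{E_0}\cdot\dim\sH_{E_1}$. Combined with $M\geq 0$ and linearity, this yields $\epsilon \leq d\cdot \Tr(M\sigma)$. The right-hand side is exactly $d$ times the winning probability of the same attack but with the shared state replaced by the maximally mixed \emph{product} state $\sigma$. Since $\sigma$ is unentangled (it is literally a product of two private maximally mixed states), the modified strategy is a valid no-entanglement attack in the sense of Proposition~\ref{prop:posver}, so $\Tr(M\sigma)\leq (\tfrac12+\tfrac{1}{2\sqrt{2}})^n$. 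Plugging back in gives the claimed bound.

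There is no genuine obstacle in the argument; everything reduces to the observation $\eta\leq d\,\sigma$ applied inside a linear functional. The only point that requires any real checking is the linearity of $\epsilon$ in the shared state, which is immediate because all adversarial operations on $F_0$ and $F_1$ can be absorbed into a single positive operator $M$ on $\sH_{E_0}\otimes\sH_{E_1}$. Conceptually, the proof says that beyond the restriction $\dim\eta=d$, no feature of the shape of $\eta$ matters: the worst an adversary with $d$-dimensional shared resources can do is amplify a product-state attack by a multiplicative factor of $d$, and Proposition~\ref{prop:posver} controls the latter.
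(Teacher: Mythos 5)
Your proposal is correct and is essentially the paper's argument (which invokes Lemma~V.3 of Beigi and K\"onig): the observation $\eta_{E_0E_1}\leq d\cdot\sigma$ with $\sigma$ the completely mixed (product, hence unentangled) state, applied to the success probability viewed as a linear functional $\Tr(M\,\eta_{E_0E_1})$, is just an operator-inequality rephrasing of the paper's step of extending $\eta$ to a basis whose uniform mixture is the completely mixed state, losing at most a factor $d$. Both then conclude via Proposition~\ref{prop:posver}.
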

Thus, for any $\alpha$ strictly smaller than \smash{$\log(\frac{1}{2}\!+\!\frac{1}{2\sqrt{2}})$}, for instance for $\alpha = 0.2$, the position verification scheme has exponentially small soundness error (in $n$) against adversaries that hold at most $\alpha n$ pre-shared entangled qubits.

Corollary~\ref{cor:posver} is an immediate consequence of Proposition~\ref{prop:posver} above and of Lemma~V.3 of~\cite{beigi11}. The latter states that $\varepsilon$-soundness with no entanglement implies $(d \cdot \varepsilon)$-soundness for adversaries that pre-share a $d$-dimensional state. This follows immediately from the fact that the pre-shared state can be extended to a basis of the $d$-dimensional state space, and the uniform mixture of all these basis states gives a non-entangled state (namely the completely mixed state). As a consequence, applying the attack, which is based on the entangled state, to the setting with no entanglement, reduces the success probability by at most a factor of $d$.

By the results on imperfect guessing (see Section~\ref{sec:fixedError}), at the price of correspondingly weaker parameters, the above results extend to a noise-tolerant version of the scheme, where it is sufficient for $x'$ to be {\em close}, rather than equal, to $x$ for $V_0$ and $V_1$ to accept.

\section{Application III: Entropic Uncertainty Relation}\label{sc:ucr}

Let $\rho$ be an arbitrary state of a qubit and $\Theta$ a uniformly random bit. Then, we may consider the min-entropy
of $X$, where $X$ is the outcome when $\rho$ is measured in either one of two bases with overlap $c$, as determined by $\Theta$.
For this example, it is known that~\cite{deutsch83,schaffner07}
\begin{align}
  H_{\min}(X|\Theta)_{\rho} \geq -\log \frac{1 + \sqrt{c}}{2} \label{ucr-c1}.
\end{align}
A similar relation follows directly from results by Maassen and Uffink~\cite{maassen88}, namely
\begin{align}
  H_{\min}(X|\Theta)_{\rho} + H_{\max}(X|\Theta)_{\rho} \geq -\log c \, , \label{ucr-c2}
\end{align}
where, $H_{\max}$ denotes the R\'enyi entropy~\cite{renyi61} of order $\frac{1}{2}$.

Recently, entropic uncertainty relations have been generalized to the case where the party guessing $X$
has access to quantum side information~\cite{berta10}. However, note that a party that is maximally entangled with the state of
the system to be measured can always guess the outcome of $X$ by applying an appropriate measurement (depending on $\Theta$) on the entangled state.
Thus, there cannot be any non-trivial state-independent bound on the entropies above conditioned on quantum side information.
Nonetheless, if two disjoint quantum memories are considered, the following generalization of~\eqref{ucr-c2} was
shown. For an arbitrary tripartite state $\rho_{ABC}$ and $X$ measured on $A$ as prescribed above, one
finds~\cite{tomamichel11}
\begin{align}
  H_{\min}(X|B \Theta)_{\rho} + H_{\max}(X| C \Theta)_{\rho} \geq -\log c \,. \label{ucr-q2}
\end{align}
In the following, we show a similar generalization of the uncertainty relation in~\eqref{ucr-c1} to quantum side information.


\begin{theorem}
  \label{th:ucr}
  Let $\rho_{ABC}$ be a quantum state and $\Theta$ a uniformly random bit. Given two
  POVMs $\{ F_{x}^{0} \}$ and $\{ F_{x}^{1} \}$ with overlap $c := \max_{x, z} \norm[\big]{\sqrt{F_{x}^{0}} \sqrt{F_{z}^{1}}}^{2}$, we find
  \begin{align*}
   \pg(X|B \Theta)_{\rho} + \pg(X|C \Theta)_{\rho} \leq 1 + \sqrt{c}
  \end{align*}
  and
  \begin{align*}
    H_{\min}(X|B \Theta)_{\rho} + H_{\min}(X|C \Theta)_{\rho} \geq -2 \log \frac{1 + \sqrt{c}}{2} ,
  \end{align*}
  where the quantities are evaluated for the post-measurement state
  \begin{align}
    \rho_{XBC\Theta} = \sum_{x,\theta} \frac{1}{2}\, \ketbra{x}_{X} \otimes \Tr_A \big( (F_x^\theta \otimes \id_{BC}) \rho_{ABC} \big) \otimes \ketbra{\theta}_{\Theta} . \label{eq:pm}
  \end{align}
\end{theorem}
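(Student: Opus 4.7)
The plan is to establish the probability-sum bound $\pg(X|B\Theta)_\rho + \pg(X|C\Theta)_\rho \leq 1+\sqrt{c}$ first, and then to deduce the entropy bound by AM-GM: if $p_B := \pg(X|B\Theta)_\rho$ and $p_C := \pg(X|C\Theta)_\rho$ satisfy $p_B + p_C \leq 1+\sqrt{c}$, then $\sqrt{p_B p_C} \leq (1+\sqrt{c})/2$, and taking the negative binary logarithm yields
\begin{align*}
H_{\min}(X|B\Theta)_\rho + H_{\min}(X|C\Theta)_\rho = -\log p_B - \log p_C \geq -2\log\tfrac{1+\sqrt{c}}{2}.
\end{align*}

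For the probability bound I pick POVMs $\{P_x^\theta\}_x$ on $\sH_B$ and $\{Q_x^\theta\}_x$ on $\sH_C$ attaining the two guessing probabilities; by Naimark's theorem (and the purification argument of Lemma~\ref{lm:pure}) these may be assumed projective. Set $\Pi_B^\theta := \sum_x F_x^\theta \otimes P_x^\theta \otimes \id_C$ and $\Pi_C^\theta := \sum_x F_x^\theta \otimes \id_B \otimes Q_x^\theta$; each is a POVM element on $\sH_A \otimes \sH_B \otimes \sH_C$ and so has operator norm at most $1$. Formula~\eqref{eq:hmin-theta} gives
\begin{align*}
\pg(X|B\Theta)_\rho + \pg(X|C\Theta)_\rho
= \tfrac{1}{2}\Tr\big((\Pi_B^0+\Pi_B^1+\Pi_C^0+\Pi_C^1)\rho_{ABC}\big)
\leq \tfrac{1}{2}\big\| \Pi_B^0+\Pi_B^1+\Pi_C^0+\Pi_C^1 \big\|.
\end{align*}

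The key step is to group the four operators in a \emph{crossed} way, $(\Pi_B^0+\Pi_C^1)+(\Pi_B^1+\Pi_C^0)$, rather than by player or by basis. The triangle inequality together with the special case of Lemma~\ref{lm:mykittaneh} bounds the norm above by $2+\norm{\sqrt{\Pi_B^0}\sqrt{\Pi_C^1}}+\norm{\sqrt{\Pi_B^1}\sqrt{\Pi_C^0}}$. To control each cross term, projectivity of $\{P_x^0\}_x$ lets me factor $\sqrt{\Pi_B^0}=\sum_x \sqrt{F_x^0}\otimes P_x^0\otimes\id_C$; the identity $P_x^0 P_z^0 = \delta_{xz}P_x^0$ collapses the intermediate sum on $B$ to yield
\begin{align*}
\sqrt{\Pi_B^0}\,\Pi_C^1\,\sqrt{\Pi_B^0}
= \sum_{x,y} \sqrt{F_x^0}\,F_y^1\sqrt{F_x^0}\otimes P_x^0 \otimes Q_y^1.
\end{align*}
Since $\sqrt{F_x^0}F_y^1\sqrt{F_x^0}\leq \norm{\sqrt{F_x^0}\sqrt{F_y^1}}^2\id_A \leq c\,\id_A$ by the definition of the overlap, and $\sum_x P_x^0 = \id_B$, $\sum_y Q_y^1 = \id_C$, this expression is bounded above by $c\,\id$. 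Hence $\norm{\sqrt{\Pi_B^0}\sqrt{\Pi_C^1}}^2\leq c$; the other cross term is controlled identically, which completes the probability bound and therefore the theorem.

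The main obstacle I anticipate is finding the crossed pairing itself. Grouping by player, $(\Pi_B^0+\Pi_B^1)+(\Pi_C^0+\Pi_C^1)$, produces a cross term $\sqrt{\Pi_B^0}\sqrt{\Pi_B^1}$ whose analysis is blocked by an uncontrolled product $P_x^0 P_y^1 P_z^0$ on $B$; grouping by basis, $(\Pi_B^0+\Pi_C^0)+(\Pi_B^1+\Pi_C^1)$, leaves $\sqrt{\Pi_B^0}\sqrt{\Pi_C^0}$, which is essentially the single-round monogamy-game operator and can have norm as large as $1$ rather than $\sqrt{c}$. Only the crossed grouping lets the projectivity of the $P$'s \emph{and} the $Q$'s simultaneously collapse the $B$- and $C$-factors and expose the $A$-side overlap.
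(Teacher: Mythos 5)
Your proposal is correct and follows essentially the same route as the paper's proof: reduce to projective measurements, bound the sum of guessing probabilities by the operator norm of $\sum_{x,\theta} F_x^\theta\otimes P_x^\theta\otimes\id_C + F_x^\theta\otimes\id_B\otimes Q_x^\theta$, split it with the same crossed pairing and the two-operator case of Lemma~\ref{lm:mykittaneh}, bound each cross term by $\sqrt{c}$ using projectivity to collapse the $B$- and $C$-factors, and finish with AM--GM. Your explicit computation of $\sqrt{\Pi_B^0}\,\Pi_C^1\,\sqrt{\Pi_B^0}\leq c\,\id$ just spells out the step the paper states directly as $\|\sqrt{A_0^0}\sqrt{A_1^1}\|\leq\max_{x,z}\|\sqrt{F_x^0}\sqrt{F_z^1}\|$.
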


\begin{proof}
  First, recall that the min-entropy is defined as (cf.~Eq.~\eqref{eq:hmin-theta})
  \begin{align*}
    2^{-H_{\min}(X|B\Theta)_{\rho}} = \pg(X|B\Theta)_{\rho} = \max_{ \{ P_x^{\theta} \}}
    \sum_{x,\theta} p_{x,\theta}\, \tr{\rho_B^{x,\theta} P_x^{\theta}}
    = \max_{ \{ P_x^{\theta} \}} \frac{1}{2} \sum_{x,\theta} \Tr\big( \rho_{AB} ( F_x^{\theta} \otimes P_x^{\theta} ) \big) ,
  \end{align*}
  where we used the fact that the post-measurement states given by~\eqref{eq:pm} satisfy
  $p_{x,\theta}\, \rho_{BC}^{x,\theta} = \frac{1}{2} \Tr_{A}
  \big( F_x^{\theta} \rho_{ABC} \big)$.

  In the following argument, we restrict ourselves to the case where the optimal guessing
  strategies for the min-entropy, $\{ P_x^\theta \}$ for Bob and
  $\{ Q_x^\theta \}$ for Charlie, are projective. To see that this is sufficient, note that we
  can always embed the state $\rho_{XBC}$ into a larger system $\rho_{XB'C'}$
  such that the optimal POVMs on $B$ and $C$ can be diluted into an equivalent projective measurement strategy
  on $B'$ and $C'$, respectively. The data-processing inequality of the min-entropy then tells us that
  $H_{\min}(X|B\Theta) \geq H_{\min}(X|B'\Theta)$ and $H_{\min}(X|C\Theta) \geq H_{\min}(X|C'\Theta)$,
  i.e., it is sufficient to find a lower bound on the smaller quantities,
  for which the optimal strategy is projective.
 
  For an arbitrary state $\rho_{ABC}$ and optimal projective POVMs $\{ P_x^\theta \}$ and
  $\{ Q_x^\theta \}$, we have
  \begin{align*}
    2^{-H_{\min}(X|B \Theta)_{\rho}} + 2^{-H_{\min}(X|C \Theta)_{\rho}}
    &= \frac{1}{2} \sum_{x,\theta} \Tr \Big( \rho_{ABC} \big( F_x^{\theta} \otimes P_x^{\theta}
    \otimes \id_C + F_x^{\theta} \otimes \id_B \otimes Q_x^\theta \big)  \Big) \\
    &\leq \frac{1}{2}  \norm[\bigg]{\sum_{x,\theta} F_x^{\theta} \otimes P_x^{\theta}
    \otimes \id_C + F_x^{\theta} \otimes \id_B \otimes Q_x^\theta} \,.
  \end{align*}

  We now upper-bound this norm. First, we rewrite
  \begin{align*}
    \norm[\bigg]{\sum_{x,\theta} F_x^{\theta} \otimes P_x^{\theta}
    \otimes \id_C + F_x^{\theta} \otimes \id_B \otimes Q_x^\theta}
     = \norm[\bigg]{\sum_{i,\theta} A_i^{\theta}} \leq \norm[\big]{A_0^0 + A_1^1} +
      \norm[\big]{A_1^0 + A_0^1} ,
  \end{align*}
  where $A_0^{\theta} = \sum_x F_x^{\theta} \otimes P_x^{\theta} \otimes \id_C$ and $A_1^{\theta} =
  \sum_x F_x^{\theta} \otimes \id_{B} \otimes Q_x^{\theta}$ are projectors.
  Applying Lemma~\ref{lm:mykittaneh} twice then yields
  \begin{align*}
    \norm[\big]{A_0^0 + A_1^1} + \norm[\big]{A_1^0 + A_0^1}
    &\leq 2 + \norm[\Big]{\sqrt{A_0^0} \sqrt{A_1^1}} + \norm[\Big]{\sqrt{A_0^1} \sqrt{A_1^0}}\\
		&\leq 2 + 2 \max_{x, z} \norm[\big]{\sqrt{F_x^{0}} \sqrt{F_{z}^{1}}} \leq 2 + 2 \sqrt{c},
  \end{align*}
  where we used that $\norm[\big]{A_i^{\theta}} \leq 1$. Hence,
  \begin{align*}
    2^{-H_{\min}(X|B \Theta)_{\rho}} + 2^{-H_{\min}(X|C \Theta)_{\rho}} = 
    \pg(X|B \Theta)_{\rho} + \pg(X|C \Theta)_{\rho}
    \leq 1 + \sqrt{c} .
  \end{align*}
  and, using the relation between arithmetic and geometric mean, we finally get
  \begin{align*}
    2^{-H_{\min}(X|B \Theta)_{\rho}}2^{- H_{\min}(X|C \Theta)_{\rho}}
    \leq \left( \frac{1+\sqrt{c}}{2} \right)^2 ,
  \end{align*}
  which implies the statement of the lemma after taking the logarithm on both sides.
\end{proof}

Note that, for $n$ measurements, each in a basis chosen uniformly at random, the above result still only guarantees one bit of uncertainty.
In fact, an adaptation of the proof of Theorem~\ref{th:ucr} yields the bound
\begin{align*}
  H_{\min}(X^n|B \Theta^n) + H_{\min}(X^n|C \Theta^n) \geq -2 \log \frac{1 + \sqrt{c^n}}{2} \,.
\end{align*}
This bound can be approximately achieved using a state that is maximally entangled between $A$
and $B$ with probability $\frac{1}{2}$ and maximally entangled between $A$ and $C$ otherwise. This construction
ensures that both conditional min-entropies are low and we thus cannot expect a stronger result.
This is in stark contrast to the situation with classical side information in~\eqref{ucr-c1} and
the alternative uncertainty relation~\eqref{ucr-q2}, where the
lower bound on the uncertainty can be shown to scale linearly in
$n$ (cf.~\cite{wehner09,tomamichel11}). Due to this restriction, we expect that the applicability of
Theorem~\ref{th:ucr} to quantum cryptography is limited.

\section{Conclusion}\label{sc:conc}

We introduce the notion of a monogamy-of-entanglement game, and we show a general parallel repetition theorem. For a BB84-based example game, we actually show {\em strong} parallel repetition, and that a non-entangled strategy is sufficient to achieve the optimal winning probability. Our results have various applications to quantum cryptography. 

It remains open to understand which monogamy-of-entanglement games satisfy strong parallel repetition. Another open question is whether (or in what cases) a {\em concentration theorem} holds, which states that with high probability the fraction of won executions in a parallel repetition cannot be much larger than the probability of winning a single execution. 

With respect to our applications, an interesting open problem is to increase the noise level that can be tolerated for one-sided device-independent security of BB84. 
It is not clear at all that the rather low noise level of $1.5\%$ we obtain in our analysis is inherent; this may very well be an artifact of our technique. {Finally, it would be interesting to extend our analysis to incorporate channel losses following the work of Branciard \emph{et al.}~\cite{branciard12}. As suggested there,  we expect that such an analysis would reveal a higher tolerance for losses as compared to fully DI QKD.}

\subsubsection*{Acknowledgements}

We thank Renato Renner for early discussions and Niek J.~Bouman for bringing this problem to the attention of some of us.
MT, JK and SW are funded by the Ministry of Education (MOE) and National Research Foundation Singapore, as well as MOE Tier 3 Grant "Random numbers from quantum processes" (MOE2012-T3-1-009).

\appendix

\section{Pure Strategies are Sufficient}
\label{app:pure}

\begin{lemma}
  \label{lm:pure}
  In the supremum over strategies in~\eqref{eq:win}, it is sufficient to consider pure strategies.
\end{lemma}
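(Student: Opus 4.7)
The plan is to establish the two directions of the equality between $\pwin(\mathsf{G})$ and the supremum restricted to pure strategies. One direction is immediate: pure strategies form a subset of all strategies, so the restricted supremum is at most $\pwin(\mathsf{G})$. The substantive task is to show that every strategy can be converted into a pure one with the same winning probability, which will give the reverse inequality.

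First I would purify the shared state. Given an arbitrary strategy $\mathcal{S} = \{\rho_{ABC}, P_x^\theta, Q_x^\theta\}$, let $\ket{\psi}_{ABCR} \in \sH_A \otimes \sH_B \otimes \sH_C \otimes \sH_R$ be a purification of $\rho_{ABC}$ on an auxiliary reference system $R$. Since the paper's definition allows $\sH_B$ to be arbitrary, I would enlarge Bob's Hilbert space to $\sH_{B'} := \sH_B \otimes \sH_R$ and redefine Bob's POVM elements as $P_x^{\prime\,\theta} := P_x^\theta \otimes \id_R$. The new shared state $\ket{\psi}_{AB'C}$ is pure, Charlie's POVM is unchanged, and since $\Tr_R \proj{\psi} = \rho_{ABC}$, Bob's outcome statistics agree with the original for every $\theta$. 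Therefore the expression~\eqref{eq:winS} evaluates to the same number, and the winning probability is preserved.

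Second I would convert the POVMs into projective measurements via Neumark's dilation theorem. For each basis $\theta \in \Theta$, the POVM $\{P_x^{\prime\,\theta}\}_x$ on $\sH_{B'}$ can be dilated to a projective measurement $\{\tilde P_x^\theta\}_x$ on $\sH_{B'} \otimes \sH_{B,\mathrm{anc}}$ such that $\tr{P_x^{\prime\,\theta}\, \sigma_{B'}} = \tr{\tilde P_x^\theta \,(\sigma_{B'} \otimes \ketbra{0}_{B,\mathrm{anc}})}$ for every $\sigma_{B'}$, with a fixed ancilla vector $\ket{0}_{B,\mathrm{anc}}$. The fixed ancilla can be absorbed into the shared pure state (tensoring $\ket{\psi}$ with $\ket{0}_{B,\mathrm{anc}}$ keeps it pure), so the new strategy has a pure state and projective measurements on Bob's side that reproduce the original winning probability. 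The identical construction is then applied on Charlie's side with an independent ancilla $\sH_{C,\mathrm{anc}}$.

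The main subtlety is in the Neumark step: a priori the dilation depends on $\theta$, so one has to confirm that a single enlarged Hilbert space and a single fixed ancilla state can simultaneously host projective realizations of all $\abs{\Theta}$ POVMs. Because $\Theta$ is finite, this is straightforward---one picks $\sH_{B,\mathrm{anc}}$ large enough to accommodate the per-$\theta$ dilations (for instance by taking a direct sum over $\theta$), and for each $\theta$ defines a different projector $\tilde P_x^\theta$ on the common space. With this bookkeeping in place, the two reductions compose, yielding a pure strategy $\mathcal{S}'$ with $\pwin(\mathsf{G}, \mathcal{S}') = \pwin(\mathsf{G}, \mathcal{S})$, which completes the proof.
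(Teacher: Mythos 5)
Your proposal is correct and follows essentially the same route as the paper: purify the shared state (the paper attaches the purifying register to Charlie rather than Bob, which is immaterial) and then apply a Neumark dilation with a fixed ancilla vector absorbed into the pure state, letting only the projectors (in the paper, the dilation unitaries $U_\theta$) depend on $\theta$. The subtlety you flag about a common ancilla for all $\theta$ is exactly what the paper resolves with its $\theta$-dependent unitary acting on a single ancilla of dimension $\abs{\cX}$, so no further enlargement is actually needed.
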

\begin{proof}
Given any strategy ${\cal S} = \{\rho_{ABC}, P_x^{\theta}, Q_x^{\theta} \}$ for a game $\mathsf{G}$, we construct a pure strategy $\tilde{\cal S} = \{\proj{\tilde{\varphi}}, \tilde{P}_x^{\theta}, \tilde{Q}_x^{\theta} \}$ with $\pwin(\mathsf{G}, \tilde{\cal S}) = \pwin(\mathsf{G}, {\cal S})$. 
First, it is clear that purifying $\rho_{ABC}$, with a purifying register that is appended to $C$, does not change the value of $\pwin(\mathsf{G}, {\cal S})$. Hence, we may assume that $\rho_{ABC}$ is already pure: $\rho_{ABC} = \proj{\varphi}$. In this case, $\pwin(\mathsf{G}, {\cal S})$ simplifies to 
$$
\pwin(\mathsf{G}, {\cal S}) = \sum_{x,\theta} \frac{1}{|\Theta|} \bra{\varphi} (\proj{x^\theta} \otimes P_x^\theta \otimes Q_x^\theta) \ket{\varphi} \, .
$$
Let $\sH_X$ be a Hilbert space of dimension $|\cX|$ and with basis $\{ \ket{x} \}_x$, and let
$\ket{\psi_0}$ be an arbitrary, fixed vector in $\sH_X$. We now set  $\ket{\tilde{\varphi}} = \ket{\varphi} \otimes \ket{\psi_0} \in \sH_A \otimes \sH_B \otimes \sH_C \otimes \sH_X$ as well as $\tilde{P}_x^{\theta} = U_\theta^\dagger (\id_B \otimes \proj{x}) U_\theta$, where $U_\theta \in \olin{\sH_B \otimes \sH_X}$ is a Neumark dilation unitary that maps
$$
\ket{\psi}\otimes \ket{\psi_0} \mapsto \sum_{x \in \cX} \sqrt{P_x^{\theta}}\ket{\psi} \otimes \ket{x}
$$
for every $\ket{\psi} \in \sH_B$.  
Then, $\tilde{P}_x^{\theta}$ is indeed a projection and hence $\tilde{P}_x^{\theta} = (\tilde{P}_x^{\theta})^\dagger \tilde{P}_x^{\theta}$, and%
\footnote{It is implicitly understood that $\tilde{P}_x^{\theta}$ only acts on the $BX$ part of $\ket{\tilde{\varphi}}$, and similarly for $U_\theta$ etc. } 
$$
\tilde{P}_x^{\theta} \ket{\tilde{\varphi}} = U_\theta^\dagger (\id_B \otimes \proj{x}) U_\theta \big(\ket{\varphi} \otimes \ket{\psi_0} \big) = U_\theta^\dagger \sqrt{P_x^{\theta}} \ket{\varphi} \otimes \ket{x} \, .
$$
Similarly, we define the projection $\tilde{Q}_x^{\theta}$ (and extend the state $\ket{\tilde{\varphi}}$). It then follows immediately that $\pwin(\mathsf{G}, \tilde{\cal S}) = \pwin(\mathsf{G}, {\cal S})$.
\end{proof}

\section{Equivalence of QKD Security Definitions}
\label{app:secdefs}

To prove security of a protocol, it is sufficient to show that the security criterion is satisfied by a state close to the true output state of the protocol. This is due to the following Lemma.

\begin{lemma}\label{lemma:SecDefs}
Let $\rho_{XB},\, \tilde{\rho}_{XB} \in \dens(\sH_X \otimes \sH_B)$ be two CQ states with $X$ over $\cal X$. Also, let $\lambda: {\cal X} \to \{0,1\}$ be a predicate on $\cal X$ and $\Lambda = \lambda(X)$, and let $\tau_X \in \dens(\sH_X)$ be arbitrary. Then
$$
\Pr_{\rho}[\Lambda] \cdot \Delta(\rho_{XB|\Lambda}, \,\tau_X \otimes \rho_{B|\Lambda}) \leq 5\Delta(\rho_{XB}, \,\tilde{\rho}_{XB}) + \Pr_{\tilde{\rho}}[\Lambda] \cdot \Delta(\tilde{\rho}_{XB|\Lambda}, \,\tau_X \otimes \tilde{\rho}_{B|\Lambda}) \, .
$$   
\end{lemma}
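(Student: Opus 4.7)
The plan is to introduce $\tilde{\rho}$ as a waypoint in two successive triangle inequalities and bound each resulting piece either by $\Delta(\rho_{XB}, \tilde{\rho}_{XB})$ or by the assumed security quantity $\Pr_{\tilde{\rho}}[\Lambda] \cdot \Delta(\tilde{\rho}_{XB|\Lambda}, \tau_X \otimes \tilde{\rho}_{B|\Lambda})$. The two elementary tools I will need are: (i)~for the sub-normalised conditional CQ-states $\rho^{+}_{XB} := \sum_{x} \lambda(x)\, p_{x}\, \ketbra{x}_{X} \otimes \rho_{B}^{x} = \Pr_{\rho}[\Lambda]\cdot \rho_{XB|\Lambda}$ and $\tilde{\rho}^{+}_{XB}$ defined analogously, the blockwise decomposition of the trace norm gives $\tfrac{1}{2}\|\rho^{+}_{XB} - \tilde{\rho}^{+}_{XB}\|_{1} \leq \Delta(\rho_{XB}, \tilde{\rho}_{XB})$; and (ii)~since $\Pr_{\rho}[\Lambda] = \tr{P \rho_{XB}}$ for the operator $P = \sum_{x} \lambda(x)\, \ketbra{x}_{X} \otimes \id_{B}$ with $0 \leq P \leq \id$, the variational characterisation of trace distance yields $|\Pr_{\rho}[\Lambda] - \Pr_{\tilde{\rho}}[\Lambda]| \leq \Delta(\rho_{XB}, \tilde{\rho}_{XB})$.

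Applying the triangle inequality through $\tilde{\rho}_{XB|\Lambda}$ and then through $\tau_{X} \otimes \tilde{\rho}_{B|\Lambda}$ splits the left-hand side into three terms $T_{1} + T_{2} + T_{3}$. For $T_{1} := \Pr_{\rho}[\Lambda]\, \Delta(\rho_{XB|\Lambda}, \tilde{\rho}_{XB|\Lambda})$, I would pull the prefactor inside the norm, rewriting it as $\tfrac{1}{2}\|\rho^{+}_{XB} - (\Pr_{\rho}[\Lambda]/\Pr_{\tilde{\rho}}[\Lambda])\, \tilde{\rho}^{+}_{XB}\|_{1}$, and then triangle-inequality through $\tilde{\rho}^{+}_{XB}$. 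The first piece is at most $\Delta(\rho_{XB}, \tilde{\rho}_{XB})$ by~(i), and the second collapses to $\tfrac{1}{2}|\Pr_{\rho}[\Lambda] - \Pr_{\tilde{\rho}}[\Lambda]|$, bounded by~(ii), so $T_{1} \leq \tfrac{3}{2} \Delta(\rho_{XB}, \tilde{\rho}_{XB})$. The term $T_{3} = \Pr_{\rho}[\Lambda]\, \Delta(\tilde{\rho}_{B|\Lambda}, \rho_{B|\Lambda})$ (using that tensoring with $\tau_{X}$ preserves the trace norm) is treated by the same manoeuvre combined with monotonicity under $\Tr_{X}$, again giving $T_{3} \leq \tfrac{3}{2} \Delta(\rho_{XB}, \tilde{\rho}_{XB})$. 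For the middle term, I would bound the prefactor via $\Pr_{\rho}[\Lambda] \leq \Pr_{\tilde{\rho}}[\Lambda] + |\Pr_{\rho}[\Lambda] - \Pr_{\tilde{\rho}}[\Lambda]|$, apply $\Delta \leq 1$ to the excess, and conclude $T_{2} \leq \Pr_{\tilde{\rho}}[\Lambda]\, \Delta(\tilde{\rho}_{XB|\Lambda}, \tau_{X} \otimes \tilde{\rho}_{B|\Lambda}) + \Delta(\rho_{XB}, \tilde{\rho}_{XB})$.

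Summing the three bounds gives $4\, \Delta(\rho_{XB}, \tilde{\rho}_{XB}) + \Pr_{\tilde{\rho}}[\Lambda]\, \Delta(\tilde{\rho}_{XB|\Lambda}, \tau_{X} \otimes \tilde{\rho}_{B|\Lambda})$, comfortably inside the claimed constant of~$5$. The only mildly tricky point is bookkeeping: the target state $\tau_{X} \otimes \rho_{B|\Lambda}$ on the left depends on $\rho$ itself (through the $B$-marginal of its $\Lambda=1$ branch), so one cannot replace $\rho$ by $\tilde{\rho}$ in a single stroke\,---\,this is what forces the symmetric third waypoint $\tau_{X} \otimes \tilde{\rho}_{B|\Lambda}$ and the renormalisation comparisons through the ratio $\Pr_{\rho}[\Lambda]/\Pr_{\tilde{\rho}}[\Lambda]$ that appear in $T_{1}$ and $T_{3}$. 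Beyond these, the argument reduces to the two elementary facts (i) and (ii); no deeper structural idea is required.
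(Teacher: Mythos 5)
Your proposal is correct, and its skeleton is the same as the paper's: both proofs expand the left-hand side by the triangle inequality through the two waypoints $\tilde{\rho}_{XB|\Lambda}$ and $\tau_X \otimes \tilde{\rho}_{B|\Lambda}$, and both pay one extra $\Delta(\rho_{XB},\tilde{\rho}_{XB})$ at the end to trade the prefactor $\Pr_{\rho}[\Lambda]$ for $\Pr_{\tilde{\rho}}[\Lambda]$. The difference lies in how the two ``renormalisation'' terms $T_1$ and $T_3$ are controlled. The paper introduces the hybrid state $\sigma_{XB} = \Pr_{\rho}[\Lambda]\,\tilde{\rho}_{XB|\Lambda} + \Pr_{\rho}[\neg\Lambda]\,\tilde{\rho}_{XB|\neg\Lambda}$, argues $\Delta(\tilde{\rho}_{XB},\sigma_{XB}) \leq \delta$ and hence $\Delta(\rho_{XB},\sigma_{XB}) \leq 2\delta$, and reads off $\Pr_{\rho}[\Lambda]\,\Delta(\rho_{XB|\Lambda},\tilde{\rho}_{XB|\Lambda}) \leq 2\delta$ (and the $B$-marginal analogue) from the exact branch decomposition of the trace distance; this yields $2\delta + 2\delta + \delta = 5\delta$. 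You instead compare the sub-normalised branches $\rho^{+}_{XB}$ and $\tilde{\rho}^{+}_{XB}$ directly, using the blockwise bound (i) plus the probability-difference bound (ii), which costs only $\delta + \tfrac{1}{2}\delta$ per term and nets the sharper constant $4$ in place of $5$; both facts (i) and (ii) are elementary consequences of the block-diagonal structure of CQ states and the variational characterisation of the trace distance the paper itself uses, so your argument is complete. The only bookkeeping caveat is that the ratio $\Pr_{\rho}[\Lambda]/\Pr_{\tilde{\rho}}[\Lambda]$ presupposes $\Pr_{\tilde{\rho}}[\Lambda]>0$, but in that degenerate case $\tilde{\rho}_{XB|\Lambda}$ in the lemma statement is itself undefined (and the case $\Pr_{\rho}[\Lambda]=0$ is trivial), so this is the same implicit assumption the paper makes and not a gap.
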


\begin{proof}
We set $\delta := \Delta(\rho_{XB}, \tilde{\rho}_{XB})$. 
From $\Delta(\rho_{XB}, \tilde{\rho}_{XB}) = \delta$ it follows in particular that the two distributions $P_X$ and $\tilde{P}_X$ are $\delta$-close, and thus that the state 
$$
\sigma_{XB} := \Pr_{\rho}[\Lambda] \cdot \tilde{\rho}_{XB|\Lambda} +  \Pr_{\rho}[\neg\Lambda] \cdot \tilde{\rho}_{XB|\neg \Lambda}
$$
is $\delta$-close to $\tilde{\rho}_{XB}$, and hence $2\delta$-close to $\rho_{XB}$, where $\neg \Lambda$ is the negation of the event $\Lambda$. 
Since $\Lambda$ is determined by $X$, we can write 
$$
\Delta(\rho_{XB},\sigma_{XB}) = \Pr_{\rho}[\Lambda] \cdot \Delta(\rho_{XB|\Lambda}, \,\tilde{\rho}_{XB|\Lambda}) +  \Pr_{\rho}[\neg\Lambda] \cdot \Delta(\rho_{XB|\neg \Lambda}, \,\tilde{\rho}_{XB|\neg \Lambda}) \, ,
$$
from which it follows that $\Pr_{\rho}[\Lambda] \cdot \Delta(\rho_{XB|\Lambda}, \,\tilde{\rho}_{XB|\Lambda}) \leq 2 \delta$, and, by tracing out $X$, also that $\Pr_{\rho}[\Lambda] \cdot \Delta(\rho_{B|\Lambda},\tilde{\rho}_{B|\Lambda}) \leq 2 \delta$. 
We can now conclude that 
\begin{align*}
\Pr_{\rho}[\Lambda] \cdot \Delta(\rho_{XB|\Lambda}, \,\tau_X \otimes \rho_{B|\Lambda}) &\leq 4\delta + \Pr_{\rho}[\Lambda] \cdot \Delta(\tilde{\rho}_{XB|\Lambda}, \,\tau_X \otimes \tilde{\rho}_{B|\Lambda})\\
&\leq 5\delta + \Pr_{\tilde{\rho}}[\Lambda] \cdot \Delta(\tilde{\rho}_{XB|\Lambda}, \,\tau_X \otimes \tilde{\rho}_{B|\Lambda})\, ,
\end{align*}
which proves the claim. 
\end{proof}

\bibliographystyle{arxiv}
\bibliography{library,stephslib}

\end{document}